\documentclass[lettersize,journal]{IEEEtran}
\usepackage{amsmath,amsfonts}
\usepackage{algorithmic}
\usepackage{algorithm}
\usepackage{array}
\usepackage{textcomp}
\usepackage{stfloats}
\usepackage{url}
\usepackage{verbatim}
\usepackage{graphicx}
\usepackage{cite}
\usepackage{amsthm,amsmath,amssymb}
\allowdisplaybreaks
\usepackage{mathrsfs}
\usepackage{enumitem}
\usepackage{subcaption} 
\usepackage{times}
\usepackage{svg}
\usepackage{multirow}
\usepackage{amsthm}            
\usepackage{longtable}
\usepackage{booktabs}
\usepackage{dsfont}
\usepackage{tabularx}
\usepackage{amsfonts,bm,mathtools}
\usepackage{dsfont}

\usepackage{xcolor}

\usepackage{threeparttable}
\usepackage{colortbl}

\theoremstyle{plain}

\newtheorem{corollary}{Corollary}

\newtheorem{proposition}{Proposition}

\allowdisplaybreaks

\begin{document}

\title{AP Association for RHS-Enabled Cell-Free Uplink MIMO in Industrial Indoor UAV Networks}

\author{Liangshun Wu, Wen Chen*, Zhendong Li, Qiong Wu and Ying Wang
\thanks{* Corresponding author: Wen Chen.}
\thanks{Liangshun Wu  and  Wen Chen are with the Department of Electronic Engineering, Shanghai Jiao Tong University, Shanghai, China (e-mail: wuliangshun@sjtu.edu.cn; wenchen@sjtu.edu.cn).  Zhendong Li is with the School of Information and Communication Engineering, Xi’an Jiaotong University, Xi’an, China (e-mail: lizhendong@xjtu.edu.cn). Qiong Wu is with the School of Internet of Things Engineering, Jiangnan University, Wuxi, China (e-mail: qiongwu@jiangnan.edu.cn). Ying Wang is with the State Key Laboratory of Networking and Switching Technology, Beijing University of Posts and Telecommunications, Beijing, China (e-mail: wangying@bupt.edu.cn).}
}


\maketitle

\begin{abstract}
Indoor industrial UAV uplink networks face serious blockage and shadowing from shelves, metal equipment, and production facilities. UAVs are also often clustered and fly along similar straight inspection routes at fixed heights. These features make traditional small-cell deployment less suitable, especially when high reliability, continuous coverage, and good service for weak UAVs are required. Cell-free networks can improve robustness through distributed access points (APs) and UAV-centric communications. Reconfigurable holographic surface (RHS)-enabled APs provide programmable analog receive beams and generate scalar post-RHS observations, which are jointly processed at the CPU for distributed uplink MIMO detection at relatively low hardware cost. {Conventional AP association relies on distance, large-scale fading, or post-combining SINR obtained with user-specific digital combiners. Here, however, all UAVs served by a single-feed RHS AP share one amplitude-constrained receive pattern and one scalar AP output. We therefore derive an SINR-like score from this physical output and, under weak inter-AP disturbance correlation, approximate the CPU-side log-det objective by an additive per-AP surrogate, yielding a low-complexity ranking rule.} The results show that the nearest AP is not always the best choice, the AP-UAV height difference may have an optimal value, and larger serving clusters bring diminishing returns. Simulations show that the proposed method improves the minimum UAV data rate, average spectral efficiency, fairness, and energy efficiency compared with benchmark schemes.

\end{abstract}

\begin{IEEEkeywords}
Indoor industrial UAV communications, cell-free, holographic MIMO, AP association, user-centric
\end{IEEEkeywords}

\section{Introduction}
\IEEEPARstart{I}{ndoor} industrial UAV communication networks are expected to support emerging applications such as warehouse inspection, facility monitoring, inventory sensing, and production-line inspection. In such scenarios, UAVs usually operate inside workshops or warehouses filled with shelves, metal equipment, rails, robots, and conveyor systems, which lead to link-dependent shadowing, occasional blockage, and highly non-uniform large-scale propagation conditions. Moreover, inspection UAVs are often deployed in localized task regions and may move along similar straight-line paths at approximately fixed altitudes, rather than being uniformly distributed over the entire hall. Meanwhile, industrial uplink services typically place stronger emphasis on reliability, coverage continuity, link stability, and weak-UAV performance than on hotspot peak throughput. These features make conventional small-cell architectures less suitable for indoor industrial UAV communications.

Cell-free architectures are promising for such scenarios because distributed access points (APs) connected to a central processing unit (CPU) can jointly receive uplink signals from UAVs and thereby improve service robustness \cite{1Ngo2017CellFree,2Bjornson2020Making,wang2023task,4Demir2021Foundations,5Chen2022Survey,6Mahmoud2024Indoor,7Alonzo2021Reliable,8Zhang2022Indoor,wang2025efficient}. Since the seminal work on cell-free massive MIMO \cite{1Ngo2017CellFree}, subsequent studies have shown the importance of centralized processing, scalable implementation, and user-centric reception for improving performance and complexity tradeoffs \cite{2Bjornson2020Making,4Demir2021Foundations,5Chen2022Survey}. For indoor industrial UAV uplink transmission, UAV-centric reception is particularly attractive because it can alleviate the cell-edge effect, improve macro-diversity, and provide stable connectivity under blockage and uneven UAV distributions \cite{6Mahmoud2024Indoor,7Alonzo2021Reliable,8Zhang2022Indoor,10Alonzo2020URLLC}. However, the performance of a cell-free system depends critically on AP association, since each UAV must determine an appropriate serving AP set before CPU-side combining is performed.

Existing AP association methods are still largely based on distance, large-scale fading, received power, threshold rules, heuristic matching, mixed-integer optimization, or learning-based policies \cite{6Mahmoud2024Indoor,7Alonzo2021Reliable,8Zhang2022Indoor,10Alonzo2020URLLC,11Bjornson2020Scalable,Xue2025Handover,Kandil2025Greedy,Shi2023UAVThreshold,AlAlwani2025CAPA,DAndrea2020Hungarian,Liao2026EdgeEnhanced,Liu2026EGAT,
Zhou2025TDDPG,Du2025DMA,Shi2025SIM,Wang2022mmWaveCF}. These methods are useful for conventional cell-free and distributed MIMO systems, and prior studies on factory automation and indoor industrial networks have demonstrated the importance of vertical geometry, centralized processing, and user-centric AP selection \cite{6Mahmoud2024Indoor,7Alonzo2021Reliable,8Zhang2022Indoor, 10Alonzo2020URLLC}. Nevertheless, {many low-complexity rules among them} assume conventional phased-array (PA)-based AP transceivers and implicitly regard average link strength as the dominant indicator of AP quality.
This assumption becomes less accurate in indoor industrial UAV scenarios. First, UAVs may appear in clustered regions, which increases the probability of strong spatial overlap and localized interference. Second, UAVs following nearby inspection paths at similar heights may be difficult to separate from the viewpoint of a given AP, even when their large-scale fading values are favorable \cite{12Bjornson2018Unlimited,13GPP36814}. Therefore, a nearer AP or an AP with stronger average received power is not necessarily the best serving AP.

Reconfigurable holographic surfaces (RHSs), which are closely related to large intelligent surfaces, holographic MIMO, and dynamic metasurface antennas, provide densely spaced programmable elements, amplitude-only control, and flexible directional responses with relatively low hardware complexity \cite{14Dardari2020LIS,15Huang2020Holographic,16Shlezinger2021DMA,17Deng2021Future,18Deng2021Beamforming,19Deng2022HDMA,20Deng2022Amplitude,21Deng2023Paradigm,22Fong2010Scalar}. Existing RHS-related studies have mainly focused on hardware architectures, holographic beamforming, multi-UAV access, and holographic radio \cite{17Deng2021Future,18Deng2021Beamforming,19Deng2022HDMA,20Deng2022Amplitude,21Deng2023Paradigm,22Fong2010Scalar}. These features make RHSs attractive as AP front ends for indoor industrial UAV cell-free networks, since they can offer both large effective apertures and fine-grained directional selectivity for uplink MIMO reception. { Both single- and multi-feed RHS architectures have been investigated, while the single-feed leaky-wave RHS represents a low-complexity implementation with experimentally demonstrated feasibility \cite{21Deng2023Paradigm,22Fong2010Scalar}. Accordingly, this work focuses on a single-feed, single-RF-chain RHS receiver, which produces one scalar output per AP.} At the same time, {the considered single-feed} RHS-enabled reception further exposes the limitations of conventional AP association rules. In particular, the service capability of a candidate AP depends not only on large-scale fading, but also on the effective channel gain produced by the RHS response, the overlap between different UAV channels, and the post-combining disturbance covariance{, because all UAVs associated with an AP share the same amplitude-constrained receive pattern}. Therefore, AP association for uplink RHS-enabled MIMO indoor industrial cell-free networks remains insufficiently explored.

Motivated by these observations, this paper studies AP association for uplink RHS-enabled MIMO transmission in indoor industrial UAV cell-free networks. An uplink signal model is developed for RHS-enabled MIMO indoor industrial UAV cell-free networks, where multi-antenna UAVs transmit to distributed RHS-enabled APs and the AP-side analog-combined scalar outputs are forwarded to the CPU for combining. The goal is to develop an association metric that captures the actual service capability of each AP for a given UAV while remaining interpretable and suitable for low-complexity ranking-based association. Building on the effective desired gain, inter-UAV overlap, and disturbance covariance after AP-side RHS combining, we formulate the AP association problem and derive a score-based rule. This relation between the proposed design and the indoor industrial UAV communication scenario deserves emphasis. Conventional distance- or fading-based association fails here for two reasons: clustered UAVs on similar paths present nearly identical LSF and ranges to a given AP, so scalar ranking cannot distinguish APs with severe inter-UAV overlap, a limitation that the proposed score overcomes by directly sensing this interference in its denominator; and dense blockages create strong but angularly mismatched links whose multipath energy the RHS receive pattern cannot effectively combine, so evaluating the post-RHS signal quality naturally deprioritizes such APs. Moreover, the system model embeds the scenario geometry and operation: the three-dimensional AP-UAV geometry underpins the height-angle tradeoff, which exists only with ceiling-mounted APs and constant-altitude UAVs; the deterministic straight-line motion model matches aisle inspection and enables trajectory-predictable association updates; and the large shadowing variance captures the severe indoor environment and produces the distance-order reversal probability.

{This paper provides two} main contributions:
\begin{enumerate}[leftmargin=1.2em]
\item a low-complexity AP association method is proposed for
RHS-enabled UAV-centric cell-free reception.

Unlike distance-, LSF-, and received-power-based methods
\cite{Xue2025Handover,Kandil2025Greedy,Shi2023UAVThreshold},
and coverage-, load-, or pilot- schemes whose AP metrics remain
largely LSF-based
\cite{AlAlwani2025CAPA,Liao2026EdgeEnhanced},
the proposed score captures the post-RHS desired gain, residual
inter-UAV leakage, and filtered noise. In contrast to conventional
SINR-based association with user-specific digital combiners, each
single-feed, single-RF-chain RHS AP produces one scalar output, so
its associated UAVs share one amplitude-constrained receive pattern.
Compared with joint MINLP-, learning-, SIM-, DMA-, and mmWave-based
designs
\cite{Liu2026EGAT,Zhou2025TDDPG,Du2025DMA,Shi2025SIM,
Wang2022mmWaveCF},
the proposed method retains a one-shot ranking implementation without
repeatedly solving the joint association--RHS optimization problem.
Under fixed probing weights and weak cross-AP disturbance correlation,
it provides a lower-bound surrogate of the probing-stage log-det rate,
for which the Top-$J_k$ rule is optimal.

\item Three interpretable conclusions are obtained. First, a nearer AP is not necessarily better than a farther AP. Second, the AP-UAV relative height difference introduces a height-angle tradeoff, which implies the existence of an interior optimum under suitable conditions. Third, enlarging the serving cluster leads to diminishing marginal gains. 
\end{enumerate}

The rest of this paper is organized as follows. Section II presents the system model. Section III develops the AP association method. Section IV {compares with existing methods, Section V} provides analytical findings and design guidelines. Section {VI} presents simulation results. Section {VII}  concludes the paper.

\section{System Model}
\subsection{Industrial indoor UAV communication scenario}
\begin{figure}[h]
\centering
\includegraphics[width=0.48\textwidth]{figures/scenario.jpg}
\caption{Illustration of a cell-free industrial indoor UAV communication scenario.}
\label{fig:scenario}
\end{figure}

Consider an industrial indoor UAV communication scenario in a workshop or warehouse of size
$D_x^{\mathrm{hall}} \times D_y^{\mathrm{hall}} \times D_z^{\mathrm{hall}}$, as illustrated in Fig. \ref{fig:scenario},
where $L$ APs are connected to a CPU through wired links and jointly serve $K$ UAV users performing inspection and monitoring tasks.
Each RHS-enabled AP has a single RF chain and produces one analog-combined scalar observation per resource block. The MIMO dimension is formed by the multi-antenna UAV streams and the stacked scalar observations from multiple distributed APs at the CPU.

Each UAV follows a straight-line inspection trajectory inside the hall at an approximately constant altitude, which is consistent with aisle inspection in warehouses or facility inspection along shelves, beams, or production lines.
UAV user $k$ is equipped with $M_k$ antennas, while AP $l$ is equipped with an RHS consisting of $N_l$ controllable radiating elements.
Let $\mathbf{q}_l=[q_{l,x},q_{l,y},q_{l,z}]^{\top}\in\mathbb{R}^{3}$ denote the position of AP $l$, and let
$
\mathbf{Q}_k=
\left[
\mathbf{q}_{k,1},\ldots,\mathbf{q}_{k,M_k}
\right]
\in\mathbb{R}^{3\times M_k}
$
collect the antenna positions of UAV user $k$, where $\mathbf{q}_{k,m}\in\mathbb{R}^{3}$ is the position of the $m$-th UAV antenna.
For the RHS of AP $l$, let $\mathcal{N}_l=\{1,\ldots,N_l\}$ denote the element index set.
The $n$-th RHS element is characterized by its local surface coordinate
$\mathbf{u}_{l,n}\in\mathbb{R}^{2}$ and physical position
$\mathbf{r}_{l,n}\in\mathbb{R}^{3}$.
Define the element-position matrix
$
\mathbf{R}_l=
\left[
\mathbf{r}_{l,1},\ldots,\mathbf{r}_{l,N_l}
\right]
\in\mathbb{R}^{3\times N_l}.
$
The local coordinate $\mathbf u_{l,n}\in\mathbb R^2$ describes the position of the
$n$-th RHS element on the two-dimensional surface, whereas
$\mathbf r_{l,n}\in\mathbb R^3$ denotes its physical position in the global coordinate
system. For a planar RHS, 
$
\mathbf{u}_{l, n}=
\left[\begin{array}{c}
u_{l, n}^{(x)} \\
u_{l, n}^{(y)}
\end{array}\right]
$
is the two-dimensional local coordinate of the
$n$-th RHS element on the surface, if the RHS is horizontally mounted and aligned with the global $x$-$y$ plane, then
$
\mathbf{q}_l+\left[\begin{array}{c}
u_{l, n}^{(x)} \\
u_{l, n}^{(y)} \\
0
\end{array}\right]
$
Within one channel coherence block, each UAV is assumed to be quasi-static, so that its position and body orientation can be treated as approximately constant. The effects of residual UAV attitude variation and antenna misalignment are absorbed into the effective steering vectors and channel matrices defined below.

For UAV user $k$, let
$
\bar{\mathbf{q}}_k(t)
=
\frac{1}{M_k}\sum_{m=1}^{M_k}\mathbf{q}_{k,m}(t)
$
denote the UAV-center position at time $t$.
Since UAV $k$ performs inspection along a straight-line trajectory at a constant altitude, its center position is modeled as
\begin{equation}
\bar{\mathbf{q}}_k(t)
=
\mathbf{q}_{k,0}
+
v_k t\,\mathbf{d}_k,
\label{eq:uav_traj}
\end{equation}
where $\mathbf{q}_{k,0}\in\mathbb{R}^{3}$ is the initial UAV-center position, $v_k$ is the UAV speed, and 
$
\mathbf{d}_k=[d_{k,x},d_{k,y},0]^{\top}\in\mathbb{R}^{3}
$
is a constant unit-direction vector satisfying
$
\|\mathbf{d}_k\|_2=1.
$
Hence, the UAV flies at a fixed altitude along its inspection path.
For AP $l$ and UAV user $k$, define the relative position vector as
$
\boldsymbol{\delta}_{lk}(t)
=
\mathbf{q}_l-\bar{\mathbf{q}}_k(t)
=
\left[
\delta_{lk,x}(t),\delta_{lk,y}(t),\delta_{lk,z}(t)
\right]^{\top}.
$
Accordingly, the horizontal distance, the vertical height difference, and the three-dimensional distance between AP $l$ and UAV user $k$ are defined as
$
r_{lk}(t)
=
\sqrt{\delta_{lk,x}^{2}(t)+\delta_{lk,y}^{2}(t)},
$
$
z_{lk}
=
|\delta_{lk,z}(t)|,
$
$
d_{lk}(t)
=
\|\boldsymbol{\delta}_{lk}(t)\|_2
=
\sqrt{r_{lk}^{2}(t)+z_{lk}^{2}}.
$
Since the UAV altitude is fixed and AP $l$ is stationary, the vertical height difference $z_{lk}$ remains constant along the trajectory, whereas the horizontal distance $r_{lk}(t)$ and the three-dimensional distance $d_{lk}(t)$ vary with time.
If the RHS normal points vertically downward, the incidence angle $\theta_{lk}(t)$ satisfies
$
\cos\theta_{lk}(t)
=
\frac{z_{lk}}{d_{lk}(t)}.
$

\begin{figure}[h]
\centering
\includegraphics[width=0.46\textwidth]{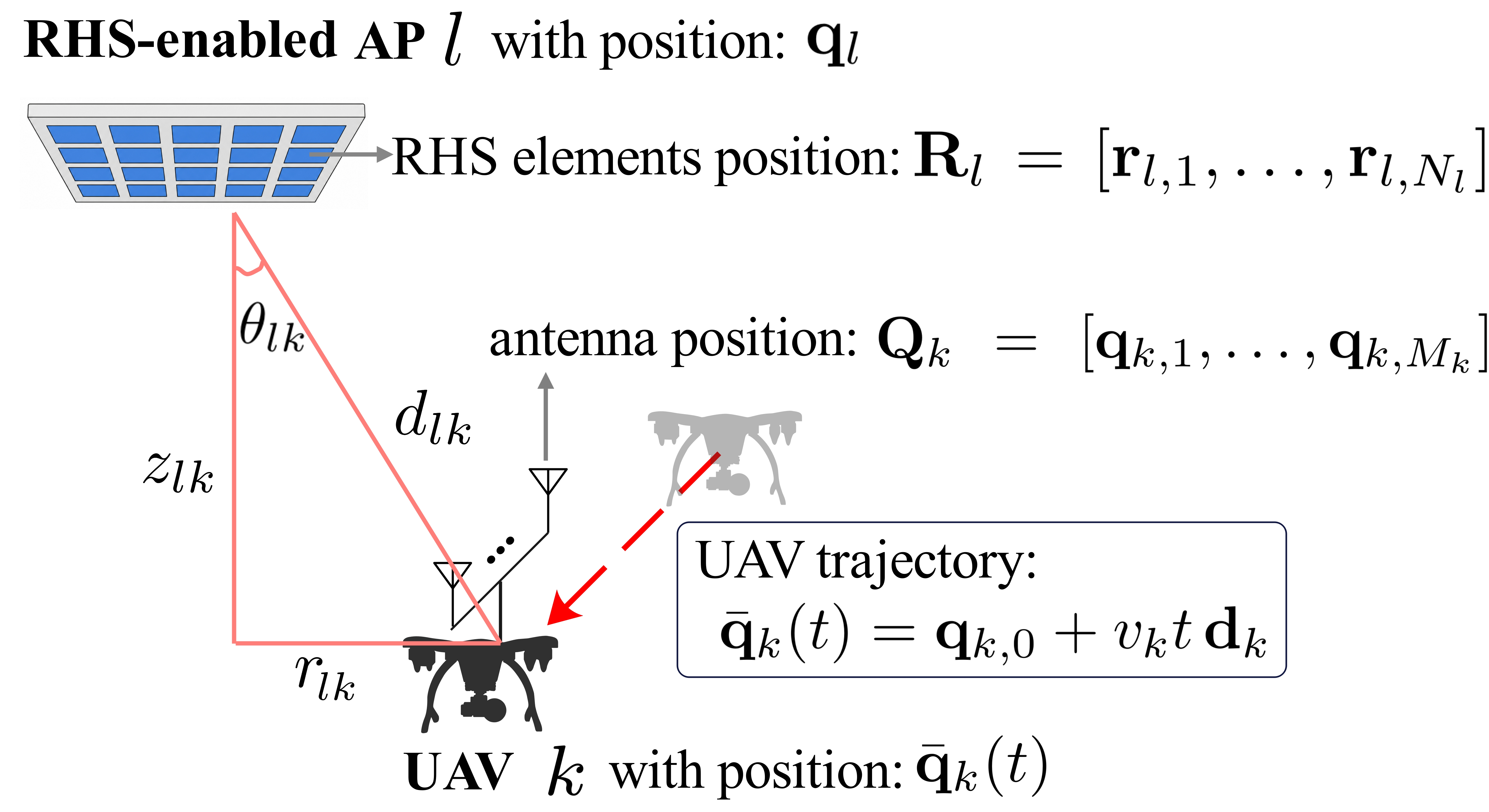}
\caption{Geometry of UAV-AP communications.}
\label{fig:geo}
\end{figure}

The geometry of the above UAV-AP pair is shown in Fig. \ref{fig:geo}. Following a three-dimensional indoor propagation model for aerial UAVs, the LSF is expressed as
\begin{equation}
\beta_{lk}
=
\bar{C}\,d_{lk}^{-\alpha}\chi_{lk},
\end{equation}
where $\alpha$ is the path-loss exponent, $\chi_{lk}$ is the shadow fading factor, and $\bar{C}$  denotes the channel power gain at a reference
distance of $1$ m, $\bar C=\left(\frac{\lambda}{4\pi d_0}\right)^2$, $d_0=1~\mathrm{m}.$
This model is used to capture the dominant distance-dependent attenuation for indoor UAV links, while the effects of UAV height, local blockage, and propagation heterogeneity are reflected through the three-dimensional geometry and the shadowing term.
For notational simplicity, the explicit time index $t$ is omitted, and all quantities are understood as being evaluated at an instantaneous point on the UAV inspection trajectory. Then, the above notations will be $r_{l,k}$, $\theta_{l,k}$, $d_{l,k}$.

\subsection{RHS receive model}
\begin{figure}[htbp]
\centering
\includegraphics[width=0.48\textwidth]{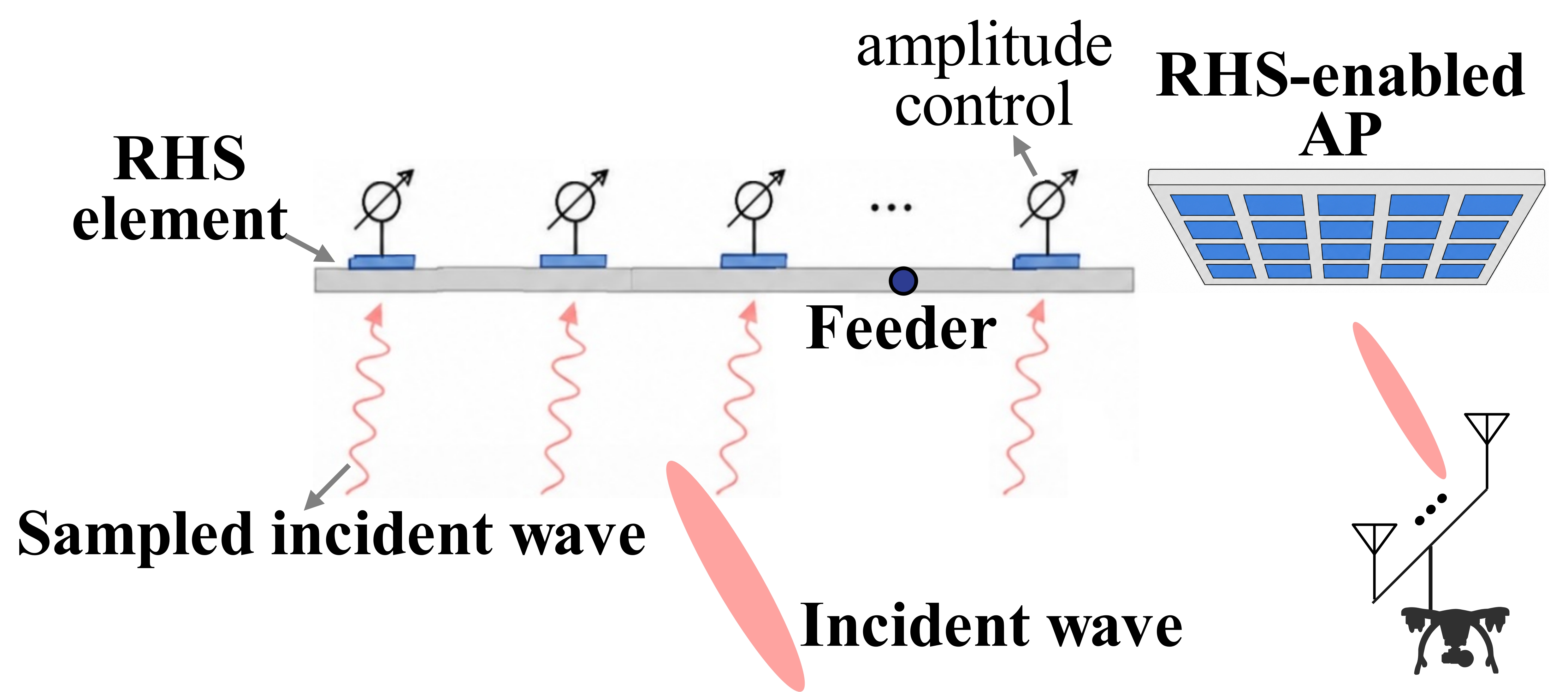}
\caption{
RHS-based uplink receiving. The incident wave is spatially sampled by the RHS elements, each of which applies an amplitude weight before coupling to the common feeder.
}
\label{fig:RHS}
\end{figure}
In transmit mode, the RHS functions as a holographic beamformer where a reference wave on the surface launched from the feed is shaped by pre‑designed real‑valued amplitude weights that encode the interference between the reference wave and the desired spatial object wave, and the resulting diffraction reconstructs the target beam in space. In receive mode, the same RHS operates as a passive amplitude‑only weighted array with no reference wave present; incident waves are sampled at each element, multiplied by the same amplitude coefficients, and then passively combined through a fixed feeder network into a single RF chain. The receive process merely reuses the pre‑computed transmit holographic pattern, exploiting channel reciprocity because the phase compensation that produces a directional beam in transmission also provides maximum‑ratio combining for signals arriving from that same direction.
Consider the uplink transmission from UAV $k$ to AP $l$, where the RHS is used as a receive antenna.
UAV $k$ transmits
$
\mathbf x_k
=
[x_{k,1},\ldots,x_{k,M_k}]^\top
\in
\mathbb C^{M_k\times 1}.
$
The covariance of $\mathbf x_k$ is
\begin{equation}
\mathbb E\{\mathbf x_k\mathbf x_k^H\}
=
\mathbf P_k,
\end{equation}
and for equal-power transmission,
$
\mathbf P_k
=
p_k\mathbf I_{M_k}.
$
The RHS surface consists of \(N_l\) elements. Each element independently receives the incident signals from all UAVs in space.
Each element applies a real-valued amplitude control. All elements are combined through a fixed feeder network to a single RF chain, producing a scalar output $z_l$.

\subsubsection{Step 1: Element-level received signal}
The signal $[\mathbf{y}_{l,k}]_n$ received by the \(n\)-th element of the RHS mounted on AP $l$  from UAV $k$ is the spatial sample of the incident waves:
\begin{equation}
\mathbf y_{l,k}
=
\mathbf H_{l,k}\mathbf x_k
+
\mathbf n_l
\in
\mathbb C^{N_l\times 1},
\label{eq:y_lk_rhs}
\end{equation}
with
\begin{equation}
\begin{aligned}
&\mathbf H_{l,k}
=
\sqrt{\beta_{lk}\cos \theta_{l k}}\\
&\begin{bmatrix}
e^{j\boldsymbol{\kappa}_{l,k,1}^{\top}\mathbf u_{l,1}} e^{-j \frac{2\pi}{\lambda} d_{l, k, 1}^{(0)}}
&
\cdots
&
e^{j\boldsymbol{\kappa}_{l,k,M_k}^{\top}\mathbf u_{l,1}} e^{-j \frac{2\pi}{\lambda} d_{l, k, M_k}^{(0)}}
\\
e^{j\boldsymbol{\kappa}_{l,k,1}^{\top}\mathbf u_{l,2}}  e^{-j \frac{2\pi}{\lambda} d_{l, k, 1}^{(0)}}
&
\cdots
&
e^{j\boldsymbol{\kappa}_{l,k,M_k}^{\top}\mathbf u_{l,2}} e^{-j \frac{2\pi}{\lambda} d_{l, k, M_k}^{(0)}}
\\
\vdots
&
\ddots
&
\vdots
\\
e^{j\boldsymbol{\kappa}_{l,k,1}^{\top}\mathbf u_{l,N_l}} e^{-j \frac{2\pi}{\lambda} d_{l, k, 1}^{(0)}}
&
\cdots
&
e^{j\boldsymbol{\kappa}_{l,k,M_k}^{\top}\mathbf u_{l,N_l}} e^{-j \frac{2\pi}{\lambda} d_{l, k, M_k}^{(0)}}
\end{bmatrix}
\end{aligned}
\label{eq:H_lk_target_uav}
\end{equation}
$\in
\mathbb C^{N_l\times M_k},$
where
\begin{itemize}
    \item \(e^{j\boldsymbol{\kappa}_{l,k,m}^{\top}\mathbf{u}_{l,n}}\) is the spatial phase sample of the incident wave at this element;
    \item \(e^{-j\frac{2\pi}{\lambda}d_{l,k,m}^{(0)}}\) is the absolute propagation phase correction;
\end{itemize}
and $
\mathbf n_l
\sim
\mathcal{CN}(\mathbf 0,\sigma_l^2\mathbf I_{N_l}).
$ 

Let $\boldsymbol{\kappa}_{l,k,m}\in\mathbb R^2$ denote the equivalent transverse wavevector associated with the uplink wave emitted by the $m$-th antenna of UAV $k$ and arriving at AP $l$. 
Since the RHS is horizontally mounted and aligned with the global $x$-$y$ plane, $\boldsymbol{\kappa}_{l,k,m}$ is obtained by projecting the propagation direction from $\mathbf q_{k,m}$ to $\mathbf q_l$ onto the local RHS plane. 
Under the plane-wave approximation across the RHS surface, we have
\begin{equation}
\boldsymbol{\kappa}_{l,k,m}
=-\frac{2\pi}{\lambda}
\left[
\begin{array}{c}
\dfrac{q_{l,x}-q_{k,m,x}}
{\|\mathbf q_l-\mathbf q_{k,m}\|_2}
\\[2mm]
\dfrac{q_{l,y}-q_{k,m,y}}
{\|\mathbf q_l-\mathbf q_{k,m}\|_2}
\end{array}
\right],
\label{eq:kappa_geometry}
\end{equation}
where $\lambda$ is the carrier wavelength. 
The negative sign follows the receive-phase convention in which the baseband phase of the received wave is proportional to the negative propagation distance. 
If the opposite phase convention is used, the sign of $\boldsymbol{\kappa}_{l,k,m}$ should be changed consistently in both the channel model and the holographic pattern construction. 
\begin{equation}
d_{l, k, m}^{(0)}=\left\|\mathbf{q}_l-\mathbf{q}_{k, m}\right\|_2
\end{equation} 
denotes the distance of the $m$-the antenna of UAV $k$ to AP $l$.

Note that before data transmission, the network operates in a pilot association phase.
During this phase, all APs estimate the spatial directions or effective
channels of candidate UAVs from uplink pilots and report the required
large-scale or post-RHS quality information to the CPU. Based on these
quantities, the CPU determines the UAV-centric serving AP sets
$\{\mathcal S_k\}_{k=1}^{K}$. Then, for each AP $l$, the set of UAVs
served by this AP is defined as
$
\mathcal U_l = \{ k \mid l \in \mathcal S_k \}.
$
The signal sampled across the RHS elements of AP $l$ becomes
$
\mathbf y_l
=
\sum_{k\in\mathcal U_l}
\mathbf H_{l,k}\mathbf x_k
+
\mathbf n_l
$, i.e.,
$
\left[\mathbf{y}_l\right]_n=\sum_{k\in\mathcal U_l} \sum_{m=1}^{M_k} \sqrt{\beta_{l k} \cos \theta_{l k}} \cdot e^{j \boldsymbol{\kappa}_{l, k, m}^{\top} \mathbf{u}_{l, n}}  e^{-j \frac{2 \pi}{\lambda} d_{l, k, m}^{(0)}}  x_{k, m}+n_{l, n},
$
with
\begin{equation}
\left[\mathbf H_{l,k}\right]_n= \sum_{m=1}^{M_k} \sqrt{\beta_{l k} \cos \theta_{l k}} \cdot e^{j \boldsymbol{\kappa}_{l, k, m}^{\top} \mathbf{u}_{l, n}}  e^{-j \frac{2 \pi}{\lambda} d_{l, k, m}^{(0)}} ,
\end{equation}
where $x_{k,m}$ is the symbol transmitted from the \(m\)-th antenna of UAV \(k\); $n_{l, n}$ is the thermal noise at this element.

\subsubsection{Step 2: Amplitude weighting}
Each element applies its real-valued amplitude gain \(b_{l,n}\) to the received signal before coupling to the feeder:
$
[\tilde{\mathbf{y}}_l]_n = b_{l,n} \cdot [\mathbf{y}_l]_n .
$
by tuning a local impedance (e.g., via a varactor). Fig. \ref{fig:RHS} illustrates the control process.
The  composite amplitude pattern is then
$
b_{l,n}
=
\sum_{k\in\mathcal U_l}
\sum_{m=1}^{M_k}
a_{l,k,m}\bar b_{l,k,m,n},
$
denote
$
\{a_{l,k,m}:k\in\mathcal U_l,m=1,\ldots,M_k\}
$  as $\mathbf A_l$,
$
a_{l,k,m}\ge 0,
$
and
$
\sum_{k\in\mathcal U_l}
\sum_{m=1}^{M_k}
a_{l,k,m}
=
1.
$
Since each basis pattern satisfies
$
0\le \bar b_{l,k,m,n}\le 1,
$
the convex combination also satisfies
$
0\le b_{l,n}\le 1
$; and
$
\bar{b}_{l, k, m, n}
=\frac{1+\cos \left(\boldsymbol{\kappa}_{l, k, m}^{\top} \mathbf{u}_{l, n}-\beta_s \rho_{l, n}\right)}{2}
=\frac{1+\Re\left(e^{j \boldsymbol{\kappa}_{l, k, m}^{\top} \mathbf{u}_{l, n}} e^{-j \beta_s \rho_{l, n}}\right)}{2} .
$
which maximizes the received power from the direction of $\boldsymbol{\kappa}_{l, k, m}$ when applied alone. $\rho_{l, n}$ is the propagation distance along the feeding structure, $\rho_{l, n}=\left\|\mathbf{u}_{l, n}-\mathbf{u}_{l, 0}\right\|_2=\sqrt{\left(u_{l, n}^{(x)}-u_{l, 0}^{(x)}\right)^2+\left(u_{l, n}^{(y)}-u_{l, 0}^{(y)}\right)^2}$, and $\beta_s$ is the guided-mode propagation constant. Any other pattern that concentrates gain toward the desired angular direction can be used without affecting the subsequent association algorithm.

\subsubsection{Step 3: Feeder combining}
The weighted signals are combined through a fixed feeder network toward the RF chain.
The feeder introduces a fixed transmission coefficient from the \(n\)-th element to the RF chain:
$
g_{l,n} = \frac{1}{\sqrt{N_l}}\sqrt{\eta_{l,n}} \cdot e^{-\alpha_s \rho_{l, n}} \cdot e^{-j \beta_s \rho_{l, n}}
$
which is the internal feed response from the $n$-th RHS element to the single RF port, where $\eta_{l, n}$ is the element-to-feed coupling efficiency, $\alpha_s$ is the guided-mode attenuation coefficient.

\subsubsection{Step 4: Scalar RF-chain output}
\begin{figure}[h]
\centering
\includegraphics[width=0.44\textwidth]{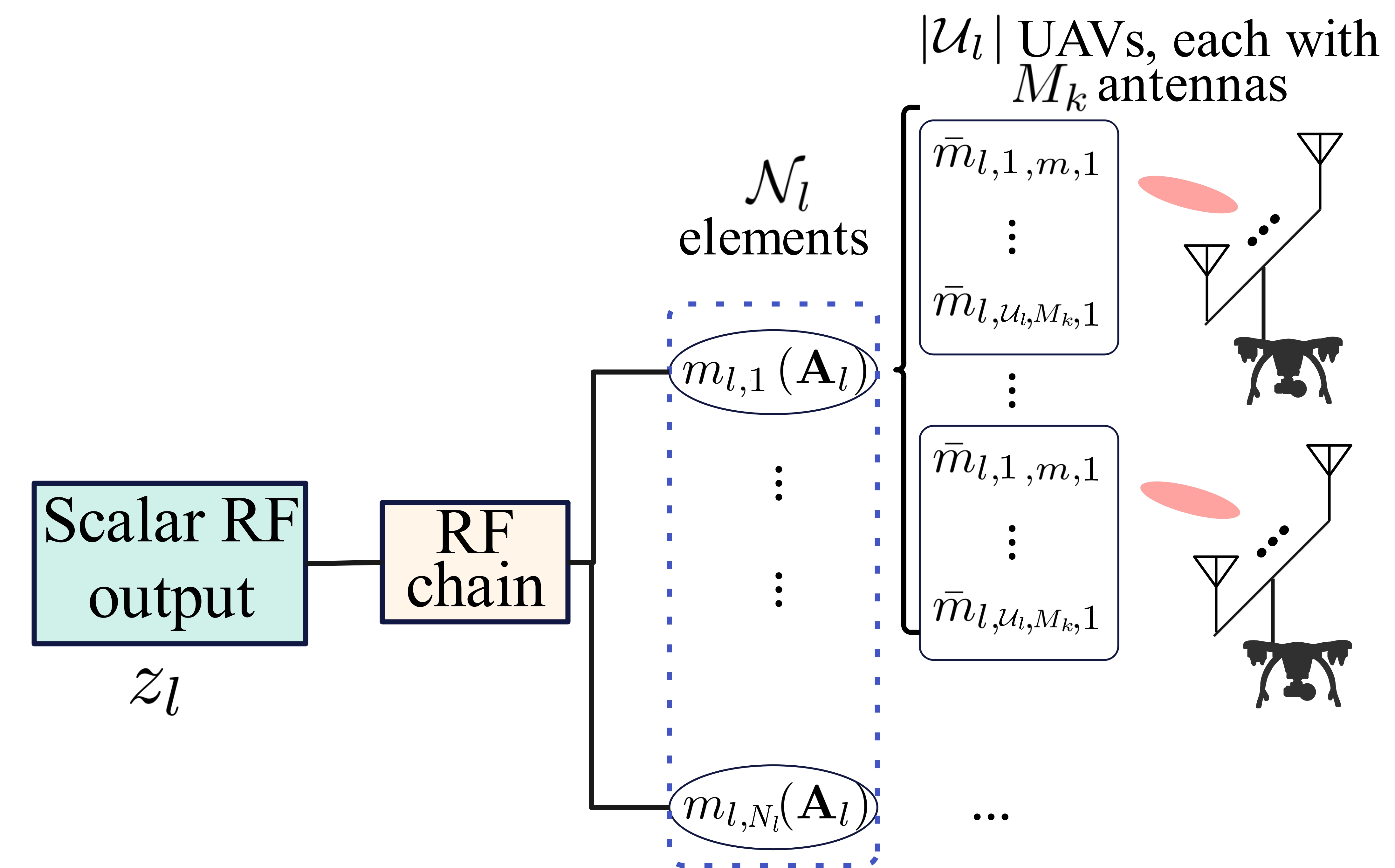}
\caption{
Composite receive amplitude pattern construction under a single-feed RHS architecture. 
}
\label{fig:receive_pattern_superposition}
\end{figure}

As shown in Fig. \ref{fig:receive_pattern_superposition}, 
define the physical feed-combining vector
$\mathbf v_l\in\mathbb C^{N_l\times 1}$ with entries
\begin{equation}
[\mathbf v_l]_n
=
g_{l,n} b_{l,n}
=
\frac{1}{\sqrt{N_l}}
\sqrt{\eta_{l,n}}\,
b_{l,n}
e^{-\alpha_s\rho_{l,n}}
e^{-j\beta_s\rho_{l,n}} .
\end{equation}
The scalar RF-chain output is therefore
\begin{equation}
z_l
=
\mathbf v_l^{\top}\mathbf y_l
=
\sum_{n=1}^{N_l}g_{l,n}b_{l,n}[\mathbf y_l]_n .
\end{equation}
For consistency with the standard complex-baseband inner-product notation,
we define the receive combining vector as
$
\mathbf w_l \triangleq \mathbf v_l^{*}.
$
Then the same scalar output can be written as
$
z_l=\mathbf w_l^H\mathbf y_l .
$
Accordingly,
$
[\mathbf w_l]_n^*
=[\mathbf v_l]_n
$
Then the scalar output can be expanded as \eqref{eq:z_l_}.
\begin{figure*}
\begin{equation}
\begin{aligned}
z_l
&=
\underbrace{\mathbf w_l^H\mathbf y_l}_{\text{Scalar RF-chain output for AP $l$}}
=
\underbrace{
\sum_{k\in\mathcal U_l}
\mathbf w_l^H\mathbf H_{l,k}\mathbf x_k
}_{\text{AP $l$ received from served UAVs}}
+
\underbrace{
\mathbf w_l^H\mathbf n_l
}_{\text{Filtered noise}}
\\
&=
\sum_{k\in\mathcal U_l}
\sum_{m=1}^{M_k}
\underbrace{
\left(
\sqrt{\beta_{lk}\cos \theta_{l k}}
\sum_{n=1}^{N_l}
[\mathbf w_l]_n^*
e^{j\boldsymbol{\kappa}_{l,k,m}^{\top}\mathbf u_{l,n}} 
e^{-j \frac{2 \pi}{\lambda} d_{l, k, m}^{(0)}}
\right)
}_{\text{Effective post-RHS channel from stream }m\text{ of UAV }k\text{ to AP }l}
x_{k,m}
+
\underbrace{
\sum_{n=1}^{N_l}
[\mathbf w_l]_n^*
n_{l,n}
}_{\text{Sum of thermal noises from all RHS elements in AP $l$ }}
\\
&=
\frac{1}{\sqrt{N_l}}
\sum_{n=1}^{N_l}
\underbrace{
\sqrt{\eta_{l,n}}
}_{\text{Element radiation efficiency}}
\underbrace{
\sum_{k\in\mathcal U_l}
\sum_{m=1}^{M_k}
a_{l,k,m}
\frac{
1+\Re\left(
e^{j \boldsymbol{\kappa}_{l, k, m}^{\top} \mathbf{u}_{l, n}}
e^{-j \beta_s \rho_{l, n}}
\right)
}{2}
}_{\text{Amplitude-controlled receive holographic pattern}}
\underbrace{
e^{-\alpha_s\rho_{l,n}}
}_{\text{Feed attenuation}}
\underbrace{
e^{-j \beta_s \rho_{l, n}}
}_{\text{Feed phase shift}}
\\
&\quad \times
\Bigg[
\underbrace{
\sum_{i\in\mathcal U_l}
\sum_{m=1}^{M_i}
\underbrace{
\sqrt{\beta_{li}\cos\theta_{li}}
}_{\text{Projected LSF}}
\underbrace{
e^{j\boldsymbol{\kappa}_{l,i,m}^{\top}\mathbf{u}_{l,n}}
}_{\text{Element sampling phase}}
\underbrace{
e^{-j\frac{2\pi}{\lambda}d_{l,i,m}^{(0)}}
}_{\text{Propagation phase}}
x_{i,m}
}_{\text{Signal received at element } n \text{ from all served UAVs}}
+
\underbrace{
n_{l,n}
}_{\text{Element thermal noise}}
\Bigg].
\end{aligned}
\label{eq:z_l_}
\end{equation}
\hrulefill
\end{figure*}
Denote the noise as
$
\tilde n_l
=
\mathbf w_l^H\mathbf n_l
=
\sum_{n=1}^{N_l}
[\mathbf w_l]_n^* n_{l,n},
$
and its variance $
\mathbb E\{|\tilde n_l|^2\}
=
\sigma_l^2
\mathbf w_l^H\mathbf w_l
=
\sigma_l^2
\sum_{n=1}^{N_l}
|[\mathbf w_l]_n|^2.
$
For UAV $k$, let
$
\mathcal S_k
=
\{l_1,\ldots,l_{|\mathcal S_k|}\}
$
denote the serving AP set whose scalar outputs are used by the CPU to
detect
$
\mathbf x_k\in\mathbb C^{M_k\times 1}.
$
The CPU stacks the scalar outputs of the selected APs as
$
\mathbf z_{\mathcal S_k}
=
[
z_{l_1},\ldots,z_{l_{|\mathcal S_k|}}
]^\top
\in
\mathbb C^{|\mathcal S_k|\times 1}.
$

For any UAV $i$, define the stacked post-RHS effective channel over the
serving AP set $\mathcal S_k$ as
\begin{equation}
\mathbf G_{k,i}(\mathcal S_k)
=
\begin{bmatrix}
\mathbf w_{l_1}^H\mathbf H_{l_1,i}\\
\mathbf w_{l_2}^H\mathbf H_{l_2,i}\\
\vdots\\
\mathbf w_{l_{|\mathcal S_k|}}^H\mathbf H_{l_{|\mathcal S_k|},i}
\end{bmatrix}
\in
\mathbb C^{|\mathcal S_k|\times M_i},
\end{equation}
where
$
\mathbf w_l\in\mathbb C^{N_l\times 1}
$
and
$
\mathbf H_{l,i}\in\mathbb C^{N_l\times M_i}
$.
Hence,
$
\mathbf w_l^H\mathbf H_{l,i}\in\mathbb C^{1\times M_i}
$.
The $m$-th entry of $\mathbf w_l^H\mathbf H_{l,i}$ is
$
[
\mathbf w_l^H\mathbf H_{l,i}
]_m
=
\sqrt{\beta_{li}\cos \theta_{li}}
\sum_{n=1}^{N_l}
[\mathbf w_l]_n^*
e^{j\boldsymbol{\kappa}_{l,i,m}^{\top}\mathbf u_{l,n}}
e^{-j \frac{2 \pi}{\lambda} d_{l,i,m}^{(0)}},
$
$m=1,\ldots,M_i$.
The stacked post-RHS noise vector is
$
\tilde{\mathbf n}(\mathcal S_k)
=
[
\tilde n_{l_1},\tilde n_{l_2},\ldots,
\tilde n_{l_{|\mathcal S_k|}}
]^\top
\in
\mathbb C^{|\mathcal S_k|\times 1}.
$
Using \eqref{eq:z_l_}, the stacked CPU observation for detecting UAV $k$
is
\begin{align}
\mathbf z_{\mathcal S_k}
=
\mathbf G_{k,k}(\mathcal S_k)\mathbf x_k
+
\sum_{i\neq k}
\mathbf G_{k,i}(\mathcal S_k)\mathbf x_i
+
\tilde{\mathbf n}(\mathcal S_k),
\end{align}
where
$
\mathbf G_{k,k}(\mathcal S_k)\mathbf x_k
\in\mathbb C^{|\mathcal S_k|\times 1}
$
and
$
\mathbf G_{k,i}(\mathcal S_k)\mathbf x_i
\in\mathbb C^{|\mathcal S_k|\times 1}.
$
The disturbance covariance for detecting UAV $k$ is
\begin{align}
\mathbf \Xi_k(\mathcal S_k)
=
&
\mathbb E
\Bigg[
\left(
\sum_{i\neq k}
\mathbf G_{k,i}(\mathcal S_k)\mathbf x_i
+
\tilde{\mathbf n}(\mathcal S_k)
\right)
\\
&
\left(
\sum_{i\neq k}
\mathbf G_{k,i}(\mathcal S_k)\mathbf x_i
+
\tilde{\mathbf n}(\mathcal S_k)
\right)^H
\Bigg]
\end{align}
$
\in
\mathbb C^{|\mathcal S_k|\times |\mathcal S_k|},
$
which is decomposed as 
\begin{align}
\sum_{i\neq k}
\mathbf G_{k,i}(\mathcal S_k)
\mathbf P_i
\mathbf G_{k,i}^{H}(\mathcal S_k)
+
\operatorname{diag}
\left(
\ldots,
\sigma_{l_{|\mathcal S_k|}}^{2}
\|\mathbf w_{l_{|\mathcal S_k|}}\|_2^2
\right)
\end{align}
where
$
\mathbf G_{k,i}(\mathcal S_k)\in\mathbb C^{|\mathcal S_k|\times M_i}
$,
$
\mathbf P_i\in\mathbb C^{M_i\times M_i}
$,
and
$
\mathbf G_{k,i}^{H}(\mathcal S_k)\in\mathbb C^{M_i\times |\mathcal S_k|}
$,
so that
$
\mathbf G_{k,i}(\mathcal S_k)
\mathbf P_i
\mathbf G_{k,i}^{H}(\mathcal S_k)
\in
\mathbb C^{|\mathcal S_k|\times |\mathcal S_k|}.
$ 
Since $\tilde{n}_l=\mathbf{w}_l^H \mathbf{n}_l$, if $\mathbf{n}_l \sim \mathcal{C} \mathcal{N}\left(\mathbf{0}, \sigma_l^2 \mathbf{I}_{N_l}\right)$, then 
\begin{align}
\mathbb{E}\left\{\left|\tilde{n}_l\right|^2\right\}=\mathbb{E}\left\{\mathbf{w}_l^H \mathbf{n}_l \mathbf{n}_l^H \mathbf{w}_l\right\}=\mathbf{w}_l^H \sigma_l^2 \mathbf{I}_{N_l} \mathbf{w}_l=\sigma_l^2\left\|\mathbf{w}_l\right\|_2^2 .
\end{align}
And, 
$\left\|\mathbf{w}_{l_j}\right\|_2^2=\sum_{n=1}^{N_{l_j}}\left|\left[\mathbf{w}_{l_j}\right]_n\right|^2$, therefore expand form is $\sigma_{l_j}^2 \sum_{n=1}^{N_{l_j}}\left|\left[\mathbf{w}_{l_j}\right]_n\right|^2=\sigma_{l_j}^2\left\|\mathbf{w}_{l_j}\right\|_2^2$ .
Assuming Gaussian signaling and treating inter-UAV overlap as colored
noise, the achievable information rate of UAV $k$ is given by
\eqref{eq:rate_logdet_expanded}.

\subsubsection{Step 5: CPU linear combining}

\begin{figure*}[t]
\begin{align}
R_k(\mathcal S_k)
&=
\log_2\det
\left(
\mathbf I_{M_k}
+
\mathbf P_k^{1/2}
\mathbf G_{k,k}^{H}(\mathcal S_k)
\mathbf \Xi_k^{-1}(\mathcal S_k)
\mathbf G_{k,k}(\mathcal S_k)
\mathbf P_k^{1/2}
\right)
\\
&=
\log_2\det
\left(
\mathbf I_{M_k}
+
\mathbf P_k^{1/2}
\begin{bmatrix}
\mathbf w_{l_1}^{H}\mathbf H_{l_1,k}\\
\mathbf w_{l_2}^{H}\mathbf H_{l_2,k}\\
\vdots\\
\mathbf w_{l_{|\mathcal S_k|}}^{H}\mathbf H_{l_{|\mathcal S_k|},k}
\end{bmatrix}^{H}
\mathbf \Xi_k^{-1}(\mathcal S_k)
\begin{bmatrix}
\mathbf w_{l_1}^{H}\mathbf H_{l_1,k}\\
\mathbf w_{l_2}^{H}\mathbf H_{l_2,k}\\
\vdots\\
\mathbf w_{l_{|\mathcal S_k|}}^{H}\mathbf H_{l_{|\mathcal S_k|},k}
\end{bmatrix}
\mathbf P_k^{1/2}
\right)
\\
&=
\log_2\det
\left(
\mathbf I_{M_k}
+
\mathbf P_k^{1/2}
\mathbf G_{k,k}^{H}(\mathcal S_k)
\left[
\sum_{i\ne k}
\mathbf G_{k,i}(\mathcal S_k)
\mathbf P_i
\mathbf G_{k,i}^{H}(\mathcal S_k)
+
\operatorname{diag}
\left(
\sigma_{l_1}^{2}\|\mathbf w_{l_1}\|_2^2,
\ldots,
\sigma_{l_{|\mathcal S_k|}}^{2}
\|\mathbf w_{l_{|\mathcal S_k|}}\|_2^2
\right)
\right]^{-1}
\mathbf G_{k,k}(\mathcal S_k)
\mathbf P_k^{1/2}
\right)
\\
&=
\log_2\det
\left(
\mathbf I_{M_k}
+
\mathbf P_k^{1/2}
\mathbf G_{k,k}^{H}(\mathcal S_k)
\left[
\sum_{i\ne k}
\begin{bmatrix}
\mathbf w_{l_1}^{H}\mathbf H_{l_1,i}\\
\mathbf w_{l_2}^{H}\mathbf H_{l_2,i}\\
\vdots\\
\mathbf w_{l_{|\mathcal S_k|}}^{H}\mathbf H_{l_{|\mathcal S_k|},i}
\end{bmatrix}
\mathbf P_i
\begin{bmatrix}
\mathbf w_{l_1}^{H}\mathbf H_{l_1,i}\\
\mathbf w_{l_2}^{H}\mathbf H_{l_2,i}\\
\vdots\\
\mathbf w_{l_{|\mathcal S_k|}}^{H}\mathbf H_{l_{|\mathcal S_k|},i}
\end{bmatrix}^{H}
+
\operatorname{diag}
\left(
\sigma_{l_1}^{2}\|\mathbf w_{l_1}\|_2^2,
\ldots,
\sigma_{l_{|\mathcal S_k|}}^{2}
\|\mathbf w_{l_{|\mathcal S_k|}}\|_2^2
\right)
\right]^{-1}
\right.
\\
&\qquad
\left.
\times
\begin{bmatrix}
\mathbf w_{l_1}^{H}\mathbf H_{l_1,k}\\
\mathbf w_{l_2}^{H}\mathbf H_{l_2,k}\\
\vdots\\
\mathbf w_{l_{|\mathcal S_k|}}^{H}\mathbf H_{l_{|\mathcal S_k|},k}
\end{bmatrix}
\mathbf P_k^{1/2}
\right),
\label{eq:rate_logdet_expanded}
\end{align}
\hrulefill
\end{figure*}
The condition
$
|\mathcal S_k|\ge M_k
$
is necessary but not sufficient for separating the $M_k$ streams.
A sufficient rank condition is
$
\operatorname{rank}
\left(
\mathbf G_{k,k}(\mathcal S_k)
\right)
=
M_k .
$
If this condition is not satisfied, the effective number of reliably
separable spatial streams is limited by the rank of the stacked
post-RHS channel matrix.
The CPU performs linear minimum mean square error (LMMSE) estimation as
$
\hat{\mathbf x}_k
=
\mathbf C_k^H
\mathbf z_{\mathcal S_k},
$
where
$
\hat{\mathbf x}_k\in\mathbb C^{M_k\times 1}
$,
$
\mathbf z_{\mathcal S_k}\in\mathbb C^{|\mathcal S_k|\times 1}
$,
and hence
$
\mathbf C_k^H\in\mathbb C^{M_k\times |\mathcal S_k|}
$.
The LMMSE combining matrix is
$
\mathbf C_k
=
\Big(
\mathbf G_{k,k}(\mathcal S_k)
\mathbf P_k
\mathbf G_{k,k}^{H}(\mathcal S_k)
+
\mathbf \Xi_k(\mathcal S_k)
\Big)^{-1}
\mathbf G_{k,k}(\mathcal S_k)
\mathbf P_k
$
$\in
\mathbb C^{|\mathcal S_k|\times M_k}$ gives the practical LMMSE estimator. 
Equivalently,
$
\mathbf C_k^H
=
\mathbf P_k
\mathbf G_{k,k}^{H}(\mathcal S_k)
\Big(
\mathbf G_{k,k}(\mathcal S_k)
\mathbf P_k
\mathbf G_{k,k}^{H}(\mathcal S_k)
+
\mathbf \Xi_k(\mathcal S_k)
\Big)^{-1}
$
$\in \mathbb C^{M_k\times |\mathcal S_k|}.$
Here, $
\mathbf P_k^{1/2}\in\mathbb C^{M_k\times M_k},
$
$
\mathbf G_{k,k}(\mathcal S_k)\in\mathbb C^{|\mathcal S_k|\times M_k},
$
$
\mathbf G_{k,k}^{H}(\mathcal S_k)\in\mathbb C^{M_k\times |\mathcal S_k|},
$
$
\mathbf \Xi_k(\mathcal S_k)\in\mathbb C^{|\mathcal S_k|\times |\mathcal S_k|},
$
$
\mathbf P_k^{1/2}
\mathbf G_{k,k}^{H}(\mathcal S_k)
\mathbf \Xi_k^{-1}(\mathcal S_k)
\mathbf G_{k,k}(\mathcal S_k)
\mathbf P_k^{1/2}
\in
\mathbb C^{M_k\times M_k}.
$

The above model can be extended to account for practical hardware impairments and power consumption as follows:
\begin{itemize}
\item \emph{Feeder loss:} considered by $\eta_{l,n}$ and
$e^{-\alpha_s\rho_{l,n}}$; the default values
$\eta_{l,n}=1$ and $\alpha_s=0$ isolate the association effect.

\item \emph{Mutual coupling:}
replace the effective channel by
$\mathbf H_{l,k}$ to $\mathbf C_l\mathbf H_{l,k}$
\cite{Zeng2026MutualCouplingISAC,Wu2026CouplingAwareRHS,
Guo2026MutualCouplingHybrid,Guo2025FastHash}, with $\mathbf C_l$ denotes the coupling matrix.

\item \emph{Finite-resolution and pattern errors:}
use $b_{l,n}$ to $\mathcal Q(b_{l,n})$, where $\mathcal Q(\cdot)$ maps the designed amplitude to the nearest
hardware-supported level, e.g., the ON/OFF states of PIN-diode
elements\cite{Deng2022HolographicRadio}. Fabrication, temperature,
and calibration errors can also be included by replacing the ideal
feeder coefficient $g_{l,n}$ with its actual value
$\widehat g_{l,n}$.

\item \emph{RF-chain impairments:}
include signal attenuation and an additional distortion term
\cite{Li2024RHSImpairments}, e.g., $\widehat z_l=\sqrt{\varepsilon_l}\,z_l+d_l$, where $\varepsilon_l\in[0,1]$ is the RF-chain hardware-quality
factor and $d_l$ is the signal-dependent distortion noise.

\item \emph{Power consumption:}
a full metric should include AP, RF-chain,
RHS-control, fronthaul, and CPU power. In this paper, we only consider the power of UAV transmitters.
\end{itemize}
These modify the effective post-RHS gain and disturbance.

\section{AP Association Method}
\subsection{Problem formulation}
The original
AP association problem is  {to maximize the CPU-side achievable rate in
\eqref{eq:rate_logdet_expanded},}
\begin{align}
(\mathcal P{0}):
\max_{\{\mathcal S_k\}_{k=1}^{K}}
\quad
&
\sum_{k=1}^{K} R_k(\mathcal S_k)
\\
\mathrm{s.t.}\quad
&
\mathcal S_k\subseteq\{1,\ldots,L\},\quad \forall k,
\\
&
M_k\leq |\mathcal S_k|\leq J_k,\quad \forall k .
\label{eq:P1_original}
\end{align}
Here $R_k(\mathcal S_k)$, defined in \eqref{eq:rate_logdet_expanded}, is the log-det achievable rate in
the system model per bandwidth, i.e., spectral efficiency (unit: bit/s/Hz). Although $(\mathcal P{0})$ is written in set
form, it is equivalent to a binary AP association problem.
$(\mathcal P{0})$ is hard to solve directly. For
$\mathcal S_k=\{l_1,\ldots,l_{|\mathcal S_k|}\}$, the
\((p,q)\)-th element of the disturbance covariance matrix is
$
[\mathbf \Xi_k(\mathcal S_k)]_{p,q}
=
\sum_{i\neq k}
\mathbf w_{l_p}^{H}
\mathbf H_{l_p,i}\mathbf P_i
\mathbf H_{l_q,i}^{H}
\mathbf w_{l_q}
+
\begin{cases}
\sigma_{l_p}^{2}\mathbf w_{l_p}^{H}\mathbf w_{l_p},
& p=q,\\
0, & p\neq q .
\end{cases}
$
Thus, different selected APs are coupled inside
$\mathbf \Xi_k(\mathcal S_k)$ and its inverse in the log-det
rate. {In addition, AP association and RHS configuration are mutually
dependent. Specifically, the serving AP sets determine
$
\{\mathcal S_k\}_{k=1}^{K}
$, then affects
$
\mathcal U_l=\{k:l\in\mathcal S_k\}
$, changes
$
\mathbf A_l
$
and
$
\mathbf w_l(\mathbf A_l),
$
whereas the receive pattern $\mathbf w_l(\mathbf A_l)$ is itself
required to evaluate the quality of AP $l$ before association.
To break this circular dependence while retaining a low-complexity
online implementation, we adopt a two-stage strategy, summarized as Algorithm \ref{alg:proposed}. }

\subsection{{Pilot association phase}}
Since the associated-user set $\mathcal U_l$ is unknown before AP
association, the data-phase RHS coefficient vector $\mathbf A_l^\star$
cannot be used to rank candidate APs.  Therefore, the association stage
uses a fixed $\mathbf w_l^{(0)}$, e.g., a wide-beam
or codebook-based probing pattern, which is independent of
$\mathcal U_l$. The preliminary association score is defined as
\begin{equation}
\psi_{l,k}^{(0)}
=
\frac{
(\mathbf w_l^{(0)})^{H}
\mathbf H_{l,k}\mathbf P_k\mathbf H_{l,k}^{H}
\mathbf w_l^{(0)}
}{
(\mathbf w_l^{(0)})^{H}
\left(
\sum_{i\neq k}
\mathbf H_{l,i}\mathbf P_i\mathbf H_{l,i}^{H}
+
\sigma_l^2\mathbf I_{N_l}
\right)
\mathbf w_l^{(0)}
}.
\label{eq:psi_asso}
\end{equation}

{
We consider
the weak cross-AP disturbance-correlation approximation, under which
the off-diagonal entries of $\mathbf\Xi_k(\mathcal S_k)$ are
neglected. The CPU-side rate becomes
$
R_k(\mathcal S_k)
\simeq
\log_2\det\Bigg(
\mathbf I_{M_k}
+
\sum_{l\in\mathcal S_k}
\frac{
\mathbf P_k^{1/2}
\mathbf H_{l,k}^{H}\mathbf w_l^{(0)}
(\mathbf w_l^{(0)})^{H}
\mathbf H_{l,k}\mathbf P_k^{1/2}
}{
(\mathbf w_l^{(0)})^{H}
\left(
\sum_{i\neq k}
\mathbf H_{l,i}\mathbf P_i\mathbf H_{l,i}^{H}
+
\sigma_l^2\mathbf I_{N_l}
\right)
\mathbf w_l^{(0)}
}
\Bigg).
$
}

{
\begin{proposition}[Trace-based bounds]
Under a fixed $\mathbf w_l^{(0)}$ and the weak cross-AP disturbance-correlation diagonal approximation, the CPU-side rate satisfies
\begin{align}
\log_2\left(
1+\sum_{l\in\mathcal S_k}\psi_{l,k}^{(0)}
\right)
\leq
R_k(\mathcal S_k)
\leq
M_k\log_2\left(
1+\frac{1}{M_k}
\sum_{l\in\mathcal S_k}\psi_{l,k}^{(0)}
\right).
\label{eq:trace_rate_bounds}
\end{align}
The two bounds coincide when $M_k=1$.
\end{proposition}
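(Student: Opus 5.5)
Under the diagonal approximation, the rate is $R_k(\mathcal S_k)=\log_2\det(\mathbf I_{M_k}+\mathbf B)$, where
\begin{equation*}
\mathbf B=\sum_{l\in\mathcal S_k}\mathbf g_l\mathbf g_l^H,\qquad
\mathbf g_l=\frac{\mathbf P_k^{1/2}\mathbf H_{l,k}^{H}\mathbf w_l^{(0)}}{\sqrt{D_{l,k}}}\in\mathbb C^{M_k\times 1}.
\end{equation*}
Here $D_{l,k}=(\mathbf w_l^{(0)})^{H}(\sum_{i\neq k}\mathbf H_{l,i}\mathbf P_i\mathbf H_{l,i}^{H}+\sigma_l^2\mathbf I_{N_l})\mathbf w_l^{(0)}>0$. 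The matrix $\mathbf B$ is a sum of rank-one positive semidefinite terms. The first step is to observe that
\begin{equation*}
\operatorname{tr}(\mathbf g_l\mathbf g_l^H)=\|\mathbf g_l\|_2^2=\frac{(\mathbf w_l^{(0)})^{H}\mathbf H_{l,k}\mathbf P_k\mathbf H_{l,k}^{H}\mathbf w_l^{(0)}}{D_{l,k}}=\psi_{l,k}^{(0)}
\end{equation*}
by \eqref{eq:psi_asso}. Hence $\operatorname{tr}(\mathbf B)=\sum_{l\in\mathcal S_k}\psi_{l,k}^{(0)}\triangleq T$. Both bounds will then follow from eigenvalue inequalities applied to the eigenvalues $\lambda_1,\ldots,\lambda_{M_k}\ge 0$ of $\mathbf B$, which satisfy $\sum_m\lambda_m=T$.

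For the upper bound, I will write $\log_2\det(\mathbf I+\mathbf B)=\sum_m\log_2(1+\lambda_m)$. Since $\log_2(1+x)$ is concave, Jensen's inequality (equivalently, AM--GM on $\prod_m(1+\lambda_m)$) gives $\sum_m\log_2(1+\lambda_m)\le M_k\log_2(1+T/M_k)$, which is the right-hand side of \eqref{eq:trace_rate_bounds}.

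For the lower bound, I will expand
\begin{equation*}
\prod_m(1+\lambda_m)=1+\sum_m\lambda_m+(\text{nonnegative higher-order elementary symmetric terms})\ge 1+T.
\end{equation*}
The higher-order terms are nonnegative because every $\lambda_m\ge0$, which holds since $\mathbf B\succeq\mathbf 0$. Taking $\log_2$, a monotone function, gives $R_k\ge\log_2(1+T)$. The one point needing care is positive semidefiniteness, which requires $\mathbf P_k\succeq\mathbf 0$ so that $\mathbf P_k^{1/2}$ is well defined; under equal power allocation $\mathbf P_k=p_k\mathbf I_{M_k}$ this is immediate.

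Finally, when $M_k=1$ the matrix $\mathbf B$ is the scalar $T$, so both bounds reduce to $\log_2(1+T)$ and coincide. I do not expect any real obstacle here. The only nontrivial identification is that each rank-one term has trace exactly equal to $\psi_{l,k}^{(0)}$, and this is where the diagonal approximation is essential: it is what makes the per-AP denominators separate.
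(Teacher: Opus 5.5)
Your proof is correct and follows the paper's argument essentially step for step. Like the paper, you write the diagonal-approximation rate as $\log_2\det(\mathbf I_{M_k}+\mathbf B)$, where $\mathbf B$ (the paper's $\mathbf A_k(\mathcal S_k)$) is a sum of rank-one positive semidefinite terms with trace $\sum_{l\in\mathcal S_k}\psi_{l,k}^{(0)}$; you then obtain the lower bound from $\prod_m(1+\lambda_m)\ge 1+\sum_m\lambda_m$ and the upper bound from AM--GM (Jensen), with both bounds coinciding when $M_k=1$.
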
}
{
\begin{proof}
Under the diagonal approximation, define
$
\mathbf A_k(\mathcal S_k)
=
\sum_{l\in\mathcal S_k}
\frac{
\mathbf P_k^{1/2}\mathbf H_{l,k}^{H}\mathbf w_l^{(0)}
(\mathbf w_l^{(0)})^{H}\mathbf H_{l,k}\mathbf P_k^{1/2}
}{
(\mathbf w_l^{(0)})^{H}
\left(
\sum_{i\neq k}\mathbf H_{l,i}\mathbf P_i\mathbf H_{l,i}^{H}
+\sigma_l^2\mathbf I_{N_l}
\right)
\mathbf w_l^{(0)}
}.
$
Each numerator has the form $\mathbf x\mathbf x^H$ and each
denominator is strictly positive due to thermal noise; hence,
$\mathbf A_k(\mathcal S_k)\succeq\mathbf 0$. By the linearity and
cyclic property of the trace,
\begin{align}
\operatorname{tr}\!\left(\mathbf A_k(\mathcal S_k)\right)
&=
\sum_{l\in\mathcal S_k}
\frac{
(\mathbf w_l^{(0)})^{H}
\mathbf H_{l,k}\mathbf P_k\mathbf H_{l,k}^{H}
\mathbf w_l^{(0)}
}{
(\mathbf w_l^{(0)})^{H}
\left(
\sum_{i\neq k}\mathbf H_{l,i}\mathbf P_i\mathbf H_{l,i}^{H}
+\sigma_l^2\mathbf I_{N_l}
\right)
\mathbf w_l^{(0)}
}
\nonumber\\
&=
\sum_{l\in\mathcal S_k}\psi_{l,k}^{(0)}.
\end{align}
Let $\lambda_1,\ldots,\lambda_{M_k}\geq0$ denote the eigenvalues of
$\mathbf A_k(\mathcal S_k)$. Then
\begin{align}
\det\!\left(\mathbf I_{M_k}+\mathbf A_k(\mathcal S_k)\right)
&=
\prod_{m=1}^{M_k}(1+\lambda_m)
\geq
1+\sum_{m=1}^{M_k}\lambda_m,
\end{align}
which gives the lower bound after taking $\log_2(\cdot)$. Moreover,
the arithmetic--geometric mean inequality yields
\begin{align}
\prod_{m=1}^{M_k}(1+\lambda_m)
\leq
\left(
1+\frac{1}{M_k}\sum_{m=1}^{M_k}\lambda_m
\right)^{M_k},
\end{align}
which gives the upper bound. Since
$\sum_m\lambda_m=\operatorname{tr}(\mathbf A_k(\mathcal S_k))
=\sum_{l\in\mathcal S_k}\psi_{l,k}^{(0)}$, the result follows.
When $M_k=1$, the matrix has only one eigenvalue, and both inequalities
hold with equality.
\end{proof}
}

{The lower bound in \eqref{eq:trace_rate_bounds} leads to the following separable surrogate of $(\mathcal P0)$:}
\begin{align}
(\mathcal P{0}'):
\max_{\mathcal S_k}
\quad
&
\Gamma_k(\mathcal S_k)
=
\sum_{l\in\mathcal S_k}\psi^{(0)}_{l,k}
\\
\mathrm{s.t.}\quad
&
M_k\le |\mathcal S_k|\le J_k .
\end{align}

\begin{proposition}[Optimal solution of \((\mathcal P{0}')\)]
Assume \(M_k\le |\mathcal S_k| \le J_k\leq L\). Since \(\psi^{(0)}_{l,k}\geq0\), an optimal solution is obtained by selecting the \(J_k\) APs with the largest scores:
$
\mathcal S_k^{\star}
=
\left\{
l:\psi_{l,k}\text{ is among the largest }J_k\text{ values for UAV }k
\right\}.
$
The optimal value is
$
\Gamma_k^{\star}
=
\sum_{l\in\mathcal S_k^{\star}}\psi^{(0)}_{l,k}.
$
If ties occur at the \(J_k\)-th largest score, any tied selection is optimal.
\end{proposition}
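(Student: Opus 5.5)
The plan is to treat $(\mathcal P0')$ as a cardinality-constrained maximization of a \emph{modular} (additive) set function with nonnegative weights. First, nonnegativity: each $\psi^{(0)}_{l,k}$ in \eqref{eq:psi_asso} is a ratio whose numerator $(\mathbf w_l^{(0)})^H\mathbf H_{l,k}\mathbf P_k\mathbf H_{l,k}^H\mathbf w_l^{(0)}=\|\mathbf P_k^{1/2}\mathbf H_{l,k}^H\mathbf w_l^{(0)}\|_2^2\ge 0$. The denominator is strictly positive because $\sigma_l^2>0$ and the interference term is positive semidefinite, exactly as already used in the proof of the trace-based bounds. Since the weights do not depend on $\mathcal S_k$ (fixed $\mathbf w_l^{(0)}$), $\Gamma_k(\mathcal S_k)=\sum_{l\in\mathcal S_k}\psi^{(0)}_{l,k}$ is monotone nondecreasing under set inclusion.

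Second, the cardinality reduction. I would show that some optimal set has $|\mathcal S_k|=J_k$. Take any feasible $\mathcal S$ with $|\mathcal S|<J_k\le L$. Adding any $l\notin\mathcal S$ keeps feasibility, since $M_k\le|\mathcal S|+1\le J_k$, and it does not decrease $\Gamma_k$ because $\psi^{(0)}_{l,k}\ge0$. Iterating this yields a set of size exactly $J_k$ with value at least $\Gamma_k(\mathcal S)$. It therefore suffices to optimize over sets of size exactly $J_k$. Feasibility of that size is guaranteed by the assumption $M_k\le J_k\le L$.

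Third, an exchange argument among sets of size $J_k$. Sort the scores as $\psi_{(1)}\ge\cdots\ge\psi_{(L)}$ and let $\mathcal S_k^\star$ be the indices of the top $J_k$. For any $\mathcal T$ with $|\mathcal T|=J_k$, the sets $\mathcal T\setminus\mathcal S_k^\star$ and $\mathcal S_k^\star\setminus\mathcal T$ have equal cardinality. Every element of the latter has score at least $\psi_{(J_k)}$, and every element of the former has score at most $\psi_{(J_k)}$. Pairing them arbitrarily gives
\begin{equation*}
\Gamma_k(\mathcal S_k^\star)-\Gamma_k(\mathcal T)=\sum_{l\in\mathcal S_k^\star\setminus\mathcal T}\psi^{(0)}_{l,k}-\sum_{l\in\mathcal T\setminus\mathcal S_k^\star}\psi^{(0)}_{l,k}\ge0 .
\end{equation*}
This establishes optimality and the stated optimal value. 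Equivalently, one can use the elementary fact that the sum of any $J_k$ entries is at most the sum of the $J_k$ largest.

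For ties, the same comparison shows that any two choices of the top-$J_k$ set differ only by swapping elements whose scores both equal $\psi_{(J_k)}$. Such swaps leave $\Gamma_k$ unchanged, so every tied selection attains $\Gamma_k^\star$. The argument is elementary throughout. The only point needing care is the second step: it relies on $J_k\le L$ and on $\psi^{(0)}_{l,k}\ge0$, including the possibility $\psi^{(0)}_{l,k}=0$. In the zero-score case, a smaller optimal set may also exist, but the top-$J_k$ set remains optimal, which is all the statement claims.
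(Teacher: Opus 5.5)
Your proof is correct and takes essentially the same route as the paper, whose one-line argument simply invokes $\psi^{(0)}_{l,k}\ge 0$ to conclude that the Top-$J_k$ set is optimal. It also cites monotonicity of $\log_2(1+x)$, which is only needed to link $\Gamma_k$ back to the lower bound in \eqref{eq:trace_rate_bounds}. Your padding step to $|\mathcal S_k|=J_k$ and the exchange argument, including the tie case, supply exactly the details that argument leaves implicit; the only tacit assumption, which the paper shares, is $\mathbf w_l^{(0)}\neq\mathbf 0$ for the strictly positive denominator.
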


\begin{proof}
Because $\log_2(1+x)$ is monotonically increasing and
$\psi_{l,k}^{(0)}\geq0$, maximizing
$\Gamma_k(\mathcal S_k)$ is equivalent to selecting the $J_k$
largest scores.
\end{proof}

\subsection{{Data-phase RHS configuration}}
\label{sec:data_rhs_configuration}
{
The probing stage determines the serving AP sets
$\{\mathcal S_k^\star\}_{k=1}^{K}$. Consequently, the associated-UAV
set of AP $l$ becomes known as
\begin{equation}
\mathcal U_l=\{k:l\in\mathcal S_k^\star\}.
\end{equation}
Each AP can then configure its shared RHS receive pattern for its
selected UAVs. This second stage affects} the data-phase transmission
and final rate evaluation, but does not trigger AP re-ranking within
the same association interval.

For given $\mathbf A_l=\{a_{l,k,m}:k\in\mathcal U_l,m=1,\ldots,M_k\}$, define the post-RHS signal-to-disturbance
ratio of AP $l$ for UAV $k$:
\begin{align}
&\gamma_{l,k}(\mathbf A_l)\\
&=\frac{
\mathbf w_l^H(\mathbf A_l)
\mathbf H_{l,k}\mathbf P_k\mathbf H_{l,k}^{H}
\mathbf w_l(\mathbf A_l)
}{
\mathbf w_l^H(\mathbf A_l)
\left(
\sum_{i\neq k}
\mathbf H_{l,i}\mathbf P_i\mathbf H_{l,i}^{H}
+
\sigma_l^2\mathbf I_{N_l}
\right)
\mathbf w_l(\mathbf A_l)
}\\
&=
\frac{
\displaystyle
\sum_{n=1}^{N_l}\sum_{r=1}^{N_l}
[\mathbf w_l]_n^*(\mathbf A_l)
\left[
\mathbf H_{l,k}\mathbf P_k\mathbf H_{l,k}^{H}
\right]_{n,r}
[\mathbf w_l]_r(\mathbf A_l)
}{
\displaystyle
\sum_{n=1}^{N_l}\sum_{r=1}^{N_l}
[\mathbf w_l]_n^*(\mathbf A_l)
\left[
\sum_{i\neq k}
\mathbf H_{l,i}\mathbf P_i\mathbf H_{l,i}^{H}
+
\sigma_l^2\mathbf I_{N_l}
\right]_{n,r}
[\mathbf w_l]_r(\mathbf A_l)
},
\end{align}
with components computed as \eqref{eq:gamma_full_and_derivative}.
\begin{figure*}
\begin{equation}
\begin{aligned}
&[\mathbf w_l]_n(\mathbf A_l)=
\frac{1}{\sqrt{N_l}}\sqrt{\eta_{l,n}}
\sum_{k\in\mathcal U_l}
\sum_{m=1}^{M_k}
a_{l,k,m}
\frac{
1+\Re\left\{
e^{j\boldsymbol{\kappa}_{l,k,m}^{\top}\mathbf u_{l,n}} e^{-j \beta_s \rho_{l, n}}
\right\}
}{2}
e^{-\alpha_s\rho_{l,n}}
e^{-j \beta_s \rho_{l, n}},
\\
&\left[
\mathbf H_{l,k}\mathbf P_k\mathbf H_{l,k}^{H}
\right]_{n,r}
=
\beta_{lk}\cos\theta_{lk}
\sum_{m=1}^{M_k}
\sum_{t=1}^{M_k}
[\mathbf P_k]_{m,t}
e^{j\boldsymbol{\kappa}_{l,k,m}^{\top}\mathbf u_{l,n}} e^{-j\boldsymbol{\kappa}_{l,k,t}^{\top}\mathbf u_{l,r}}
 e^{-j\frac{2\pi}{\lambda}d_{l,k,m}^{(0)}} e^{+j\frac{2\pi}{\lambda}d_{l,k,t}^{(0)}} ,
\\
&\quad
\left[
\sum_{i\neq k}
\mathbf H_{l,i}\mathbf P_i\mathbf H_{l,i}^{H}
+
\sigma_l^2\mathbf I_{N_l}
\right]_{n,r}
=
\sum_{i\neq k}
\beta_{li}\cos\theta_{li}
\sum_{m=1}^{M_i}
\sum_{t=1}^{M_i}
[\mathbf P_i]_{m,t}
e^{j\boldsymbol{\kappa}_{l,i,m}^{\top}\mathbf u_{l,n}}
e^{-j\boldsymbol{\kappa}_{l,i,t}^{\top}\mathbf u_{l,r}}
e^{-j\frac{2\pi}{\lambda}d_{l,i,m}^{(0)}} e^{+j\frac{2\pi}{\lambda}d_{l,i,t}^{(0)}} 
+
\sigma_l^2
\begin{cases}
1, & n=r,\\
0, & n\neq r,
\end{cases}
\end{aligned}
\label{eq:gamma_full_and_derivative}
\end{equation}
\hrulefill
\end{figure*}
The  RHS amplitude optimization problem at AP \(l\) is
formulated as
\begin{align}
(\mathcal P{1}):
\max_{\mathbf A_l}
\quad
&
\sum_{k\in\mathcal U_l}
\gamma_{l,k}(\mathbf A_l)
\\
\mathrm{s.t.}\quad
&
a_{l,k,m}\geq 0,
k\in\mathcal U_l,m=1,\ldots,M_k,
\\
&
\sum_{k\in\mathcal U_l}
\sum_{m=1}^{M_k}
a_{l,k,m}
=
1 .
\label{eq:P0}
\end{align}
 $(\mathcal P{1})$ is a nonconvex sum-of-ratios problem.
\begin{proposition}[Projected-gradient update for \((\mathcal P{1})\)]
At iteration $q$, the  RHS coefficients are updated by
\begin{equation}
a_{l,k,m}^{(q+1)}
=
\left[
a_{l,k,m}^{(q)}
+
\mu^{(q)}
\sum_{k\in\mathcal U_l}
\left.
\frac{\partial \gamma_{l,k}}
{\partial a_{l,k,m}}
\right|_{\mathbf A_l=\mathbf A_l^{(q)}}
-
\nu_l^{(q)}
\right]^+ ,
\label{eq:pg_update}
\end{equation}
where $[x]^+=\max\{x,0\}$, $\mu^{(q)}>0$ is the step size,
and \(\nu_l^{(q)}\) is chosen such that
$
\sum_{k\in\mathcal U_l}
\sum_{m=1}^{M_k}
a_{l,k,m}^{(q+1)}=1.
$
This update is the Euclidean projection of the gradient-ascent
point onto the simplex feasible set of \((\mathcal P{1})\).
\end{proposition}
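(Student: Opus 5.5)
The plan has two parts. First I would show that the update \eqref{eq:pg_update} is the Euclidean projection of a gradient-ascent point onto the simplex
\[
\Delta_l=\Big\{\mathbf A_l:\ a_{l,k,m}\ge 0,\ \textstyle\sum_{k\in\mathcal U_l}\sum_{m=1}^{M_k}a_{l,k,m}=1\Big\}.
\]
Second, I would check that the gradient entering it is well defined.

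Write the objective of $(\mathcal P1)$ as $f(\mathbf A_l)=\sum_{k\in\mathcal U_l}\gamma_{l,k}(\mathbf A_l)$. Its gradient entry with respect to $a_{l,k',m'}$ is $g_{k',m'}=\sum_{k\in\mathcal U_l}\partial\gamma_{l,k}/\partial a_{l,k',m'}$; the summation index in \eqref{eq:pg_update} should be read as running over all $k\in\mathcal U_l$, separate from the coordinate being updated. Differentiability comes from the structure of $\mathbf w_l(\mathbf A_l)$. By \eqref{eq:gamma_full_and_derivative}, $\mathbf w_l(\mathbf A_l)$ is affine, in fact linear, in $\mathbf A_l$:
\[
\mathbf w_l=\sum_{k,m}a_{l,k,m}\mathbf c_{l,k,m},
\]
with fixed vectors $\mathbf c_{l,k,m}$. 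Hence each numerator and denominator of $\gamma_{l,k}$ is a real quadratic form in $\mathbf A_l$. The denominator is at least $\sigma_l^2\|\mathbf w_l\|_2^2$. This is strictly positive on $\Delta_l$ because the basis patterns are nonnegative, cannot all vanish, and $\sum a=1$; one needs $\bar b_{l,k,m,n}>0$ for some $n$, which holds generically. Therefore $\gamma_{l,k}$ is a smooth ratio on a neighborhood of $\Delta_l$. The quotient rule then gives the explicit derivative
\[
\frac{\partial\gamma_{l,k}}{\partial a}=\frac{2\Re\{\mathbf c^H\mathbf S_k\mathbf w_l\}\,D_k-2\Re\{\mathbf c^H\mathbf D_k\mathbf w_l\}\,N_k}{D_k^2},
\]
where $\mathbf S_k$ and $\mathbf D_k$ are the signal and disturbance matrices, and $N_k$ and $D_k$ are the corresponding quadratic forms.

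The main step is the projection itself. For $\mathbf y=\mathbf A_l^{(q)}+\mu^{(q)}\mathbf g$, I would solve
\[
\min_{\mathbf a\in\Delta_l}\tfrac12\|\mathbf a-\mathbf y\|_2^2 .
\]
This is a strictly convex quadratic over a nonempty compact convex polytope, so a unique minimizer exists. Since the constraints are affine, Slater-type qualification is automatic and the KKT conditions are necessary and sufficient. With multiplier $\nu$ for the equality and $\xi_{k,m}\ge0$ for nonnegativity, stationarity gives
\[
a_{k,m}-y_{k,m}+\nu-\xi_{k,m}=0 .
\]
Complementary slackness then yields $a_{k,m}=[y_{k,m}-\nu]^+$, which is exactly \eqref{eq:pg_update} with $\nu_l^{(q)}=\nu$.

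The remaining point, and the one most likely to need care, is the existence and uniqueness of $\nu$ satisfying the sum constraint. Define
\[
\phi(\nu)=\sum_{k,m}[y_{k,m}-\nu]^+ .
\]
This function is continuous and nonincreasing. It is strictly decreasing wherever it is positive, it tends to $+\infty$ as $\nu\to-\infty$, and it equals $0$ for $\nu\ge\max y$. By the intermediate value theorem there is a unique $\nu$ with $\phi(\nu)=1$. It can be computed by sorting $\mathbf y$ and using the standard threshold formula.

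This establishes that the update is the Euclidean projection of the gradient-ascent point onto the simplex. I would also remark, without proving global optimality, that for a Lipschitz gradient and a small enough step $\mu^{(q)}$ the iterates increase $f$ monotonically and reach stationary points. This follows from standard projected-gradient theory, since $(\mathcal P1)$ is nonconvex.
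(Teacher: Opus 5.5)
Your proposal is correct and follows the same route as the paper: both write the Euclidean projection of the gradient-ascent point onto the simplex as a quadratic program and read off $a_{l,k,m}^{(q+1)}=[\tilde a_{l,k,m}^{(q)}-\nu_l^{(q)}]^+$ from KKT stationarity and complementary slackness. Your extra steps fill in points the paper leaves implicit: the monotonicity and intermediate-value argument that the $\nu_l^{(q)}$ fixed by the sum constraint exists, is unique, and so equals the KKT multiplier; the positivity of the denominator of $\gamma_{l,k}$ on the simplex; and the note that the summation index in the update formula clashes with the coordinate index.
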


\begin{proof}
Define
$
\tilde a_{l,k,m}^{(q)}
=
a_{l,k,m}^{(q)}
+
\mu^{(q)}
\sum_{k\in\mathcal U_l}
\left.
\frac{\partial \gamma_{l,k}}
{\partial a_{l,k,m}}
\right|_{\mathbf A_l=\mathbf A_l^{(q)}} .
$
The projection onto the simplex solves
$
\min_{\mathbf A_l}
\sum_{k\in\mathcal U_l}
\sum_{m=1}^{M_k}
\left(
a_{l,k,m}-\tilde a_{l,k,m}^{(q)}
\right)^2
$
subject to
$
a_{l,k,m}\ge0,
$
$
\sum_{k\in\mathcal U_l}
\sum_{m=1}^{M_k}
a_{l,k,m}=1.
$
The KKT conditions are
$
a_{l,k,m}-\tilde a_{l,k,m}^{(q)}
+\nu_l^{(q)}-\lambda_{l,k,m}=0,
$
$
a_{l,k,m}\ge0,
$
$
\lambda_{l,k,m}\ge0,
$
$
\lambda_{l,k,m}a_{l,k,m}=0,
$
$
\sum_{k\in\mathcal U_l}
\sum_{m=1}^{M_k}
a_{l,k,m}=1.
$
Thus,
$
a_{l,k,m}^{(q+1)}
=
\left[
\tilde a_{l,k,m}^{(q)}-\nu_l^{(q)}
\right]^+,
$
which gives \eqref{eq:pg_update}.
\end{proof}
Specifically, $\frac{\partial \gamma_{l,k}}
{\partial a_{l,k,m}}$ is computed as 
\begin{equation}
\begin{aligned}
\frac{\partial \gamma_{l,k}}
{\partial a_{l,k,m}}
&=
\frac{
2\Re\!\left\{
\left(
\frac{\partial \mathbf w_l}
{\partial a_{l,k,m}}
\right)^H
\mathbf H_{l,k}\mathbf P_k\mathbf H_{l,k}^{H}
\mathbf w_l
\right\}
}{
\mathbf w_l^H
\left(
\sum_{i\neq k}
\mathbf H_{l,i}\mathbf P_i\mathbf H_{l,i}^{H}
+
\sigma_l^2\mathbf I_{N_l}
\right)
\mathbf w_l
}-
\\
&
\frac{
\mathbf w_l^H
\mathbf H_{l,k}\mathbf P_k\mathbf H_{l,k}^{H}
\mathbf w_l
}{
\left[
\mathbf w_l^H
\left(
\sum_{i\neq k}
\mathbf H_{l,i}\mathbf P_i\mathbf H_{l,i}^{H}
+
\sigma_l^2\mathbf I_{N_l}
\right)
\mathbf w_l
\right]^2
}\times
\\
&
2\Re\!\left\{
\left(
\frac{\partial \mathbf w_l}
{\partial a_{l,k,m}}
\right)^H
\left(
\sum_{i\neq k}
\mathbf H_{l,i}\mathbf P_i\mathbf H_{l,i}^{H}
+
\sigma_l^2\mathbf I_{N_l}
\right)
\mathbf w_l
\right\}.
\end{aligned}
\end{equation}

\begin{algorithm}[t]
\caption{Two-stage ranking-based AP association and RHS configuration per association interval}
\label{alg:proposed}
\begin{algorithmic}[1]
\REQUIRE
\(\{\mathbf q_l\}\), \(\{\mathbf Q_k\}\), \(\{\mathbf P_k\}\),
\(\{J_k\}\), \(\epsilon\), and \(I_{\max}\).
\ENSURE
Serving AP sets \(\{\mathcal S_k\}_{k=1}^{K}\), associated-UAV sets
\(\{\mathcal U_l\}_{l=1}^{L}\), and RHS weights \(\{\mathbf w_l\}_{l=1}^{L}\).

\STATE {Estimate $(\mathbf w_l^{(0)})^H\mathbf H_{l,k}$ and the local interference-plus-noise powers.}

\STATE \textbf{Pilot association phase:}
\FOR{each AP \(l=1,\ldots,L\)}
    \STATE Construct (or load from a codebook) the RHS basis patterns $\left\{\bar{b}_{l, k, m, n}\right\}$ based on the UAV angular directions, and form an initial probing weight $\mathbf{w}_l^{(0)}$ (e.g., a wide-beam setting) for pilot-based channel estimation.
\ENDFOR

\FOR{each UAV \(k=1,\ldots,K\)}
    \STATE Compute the preliminary score $\psi_{l,k}^{(0)}$ for all AP-UAV pairs
using the weight $\mathbf w_l^{(0)}$, before optimizing
$\mathbf A_l$.
    \STATE Select the \(J_k\) APs with the largest scores:
    $
    \mathcal S_k
    =
    \operatorname{Top}_{J_k}
    \left\{\psi_{1,k}^{(0)},\ldots,\psi_{L,k}^{(0)}\right\}.
    $
\ENDFOR

\FOR{each AP \(l=1,\ldots,L\)}
    \STATE Determine the associated UAV set
    $
    \mathcal U_l=\{k\mid l\in\mathcal S_k\}.
    $
    \IF{\(\mathcal U_l\neq\emptyset\)}
        \STATE Optimize \(\mathbf A_l\) over \(\mathcal U_l\) and
        construct the data-phase RHS weight \(\mathbf w_l\).
    \ELSE
        \STATE Keep AP \(l\) inactive in the data phase.
    \ENDIF
\ENDFOR

\STATE \textbf{Data transmission phase:}
\FOR{each active AP \(l\) with \(\mathcal U_l\neq\emptyset\)}
    \STATE Generate \(z_l=\mathbf w_l^H\mathbf y_l\) and forward \(z_l\)
    to the CPU.
\ENDFOR

\FOR{each UAV \(k=1,\ldots,K\)}
    \STATE Stack the selected AP outputs as
    $
    \mathbf z_{\mathcal S_k}=[z_l]_{l\in\mathcal S_k}.
    $
\ENDFOR

\RETURN
\(\{\mathcal S_k\}_{k=1}^{K}\), \(\{\mathcal U_l\}_{l=1}^{L}\),
and \(\{\mathbf w_l\}_{l=1}^{L}\).
\end{algorithmic}
\end{algorithm}

The above procedure is summarized in Algorithm 1 and is executed over association rounds. 
Within each interval, UAV positions and channels are regarded as quasi-static or predictable from recent pilots, and the obtained $\mathcal S_k$, $\mathcal U_l$, and $\mathbf w_l$ are used for the corresponding data-transmission phase. 
When UAVs move, the scores are updated and the AP association is recomputed.
To avoid circular dependence between AP association and RHS optimization, the CPU first determines $\{\mathcal S_k\}_{k=1}^{K}$ and $\mathcal U_l=\{k:l\in\mathcal S_k\}$ from the preliminary scores $\psi_{l,k}^{(0)}$. 
Then, $\mathbf A_l$ and $\mathbf w_l$ are optimized for the selected users and used only for final rate evaluation, not for AP re-ranking.

{The implementation analysis and computational-complexity assessment of  Algorithm 1 are provided as follows. After
estimating the post-RHS effective channels, AP $l$ computes its $K$
scores locally and reports only these real-valued scores to the CPU.
Thus, the CPU stores $LK$ scores. Sorting the $L$ candidate APs for
each UAV requires $\mathcal O(KL\log L)$ operations, which can be
reduced to $\mathcal O(KL)$ by partial Top-$J_k$ selection. The
subsequent RHS configurations are independently optimized at
different APs. In contrast, one-swap refinement evaluates up to
$KJ_k(L-J_k)$ candidate exchanges per sweep, whereas exhaustive joint
association requires $\binom{L}{J_k}^{K}$ candidates.}

\section{A Comparison with Existing AP Association Methods}
\label{sec:assoc_comparison}

The proposed AP association metric $\psi_{l,k}$ reflects the post-RHS scalar observation quality at AP $l$, including the desired signal power, residual inter-UAV leakage, and filtered noise power. 
This makes it a covariance-aware observation-quality score rather than a distance- or fading-only metric.

\subsubsection{Distance-, LSF-, and received-power-based methods}

Many user-centric cell-free schemes select APs according to distance, LSF, or received power, e.g.,
$
\mathcal S_k^{\mathrm{LSF}}
=
\operatorname{Top}_{J_k}
\{\beta_{1k},\ldots,\beta_{Lk}\},
$
and
$
\mathcal S_k^{\mathrm{dist}}
=
\operatorname{Top}_{J_k}^{\mathrm{min}}
\{d_{1k},\ldots,d_{Lk}\}.
$
Representative examples include strongest-channel selection \cite{Xue2025Handover}, power-based association \cite{Kandil2025Greedy}, and threshold-based activation \cite{Shi2023UAVThreshold}. 
These methods mainly rank APs by $\beta_{lk}$ or $d_{lk}^{-\alpha}$, whereas $\psi_{l,k}$ also depends on the  RHS response $\mathbf w_l$, the residual leakage
$
\mathbf w_l^{H}\mathbf H_{l,i}\mathbf P_i\mathbf H_{l,i}^{H}\mathbf w_l,
i\neq k,
$
and the filtered noise power
$
\sigma_l^2\mathbf w_l^{H}\mathbf w_l.
$
Thus, a nearby AP or an AP with strong LSF may still be unfavorable if its  post-RHS observation suffers from strong interference.

\subsubsection{Coverage-, load-, and pilot-based methods}

CAPA \cite{AlAlwani2025CAPA} and CLSF \cite{Liao2026EdgeEnhanced} improve scalability by considering coverage, pilot, or load constraints, but their AP contribution metrics are still largely based on large-scale fading. 
In contrast, $\psi_{l,k}$ penalizes APs whose  RHS receive response causes strong residual leakage. 
Therefore, it can be used as a replacement for $\beta_{lk}$ in coverage- or load-aware frameworks, e.g.,
$
\alpha_k^{\psi}
=
\frac{\sum_{l\in\mathcal S_k}\psi_{l,k}}
{\sum_{j=1}^{L}\psi_{j,k}}.
$

\subsubsection{MINLP- and learning-based methods}

MINLP- and learning-based methods jointly optimize AP association with beamforming, power control, or fronthaul constraints \cite{Liu2026EGAT,Zhou2025TDDPG,Du2025DMA}. 
They can directly target weighted sum rate or QoS, but often require binary optimization, iterative solvers, offline training, or policy learning. 
By contrast, after the  RHS receive-pattern configuration, the proposed method ranks APs using $\psi_{l,k}$. 
Under the diagonal or weak inter-AP disturbance-correlation approximation, the Top-$J_k$ rule provides a low-complexity solution for the adopted additive observation-quality surrogate.

\subsubsection{SIM/DMA-, mmWave-, and multi-criterion methods}

SIM-enabled \cite{Shi2025SIM}, DMA-based \cite{Du2025DMA}, and mmWave cell-free association methods often use distance, received power, effective channel gain, or weighted sum rate under their own hardware constraints. 
Multi-criterion fuzzy methods further combine indicators such as LSF, received power, and packet-loss probability \cite{Zeng2025FuzzyPacketLoss}. 
These metrics are useful in their respective settings, but they do not explicitly model the single-feed amplitude-controlled RHS receive response. 
The proposed score instead follows the single-feed RHS receive model through $\mathbf w_l$ and evaluates the  post-RHS signal-to-disturbance quality score. 
If reliability indicators are also required, they can be incorporated multiplicatively, e.g.,
$
\tilde{\psi}_{l,k}
=
(1-p_{l,k}^{\mathrm{loss}})\psi_{l,k},
$ where $p_{l,k}^{\mathrm{loss}}$ denotes the packet loss ratio.

\subsubsection{Conventional SINR-based association methods in phased-array MIMO systems}

For a conventional digitally combined AP, a user-specific output
$z_{l,k}=\mathbf v_{l,k}^{H}\mathbf y_l$ can be
formed, with
$
\gamma_{l,k}
=
\frac{
\mathbf v_{l,k}^{H}\mathbf H_{l,k}\mathbf P_k
\mathbf H_{l,k}^{H}\mathbf v_{l,k}
}{
\mathbf v_{l,k}^{H}
\left(
\sum_{i\neq k}\mathbf H_{l,i}\mathbf P_i\mathbf H_{l,i}^{H}
+\sigma_l^2\mathbf I_{N_l}
\right)
\mathbf v_{l,k}
}.
$
In contrast, each single-feed RHS AP considered here produces one
shared scalar output $z_l=\mathbf w_l^{H}\mathbf y_l$ with $\mathbf{w}_l^H=\left(\mathbf{v}_l^*\right)^H=\mathbf{v}_l^{\top}$. Its
probing-stage score is
$
\psi_{l,k}^{(0)}
=
\frac{
(\mathbf w_l^{(0)})^{H}\mathbf H_{l,k}\mathbf P_k
\mathbf H_{l,k}^{H}\mathbf w_l^{(0)}
}{
(\mathbf w_l^{(0)})^{H}
\left(
\sum_{i\neq k}\mathbf H_{l,i}\mathbf P_i\mathbf H_{l,i}^{H}
+\sigma_l^2\mathbf I_{N_l}
\right)
\mathbf w_l^{(0)}
}.
$
The two ratios are algebraically identical for
$\mathbf v_{l,k}=\mathbf w_l^{(0)}$; hence, the ratio itself is not
claimed as new. Their structural difference is that
$\mathbf v_{l,k}$ is generally user-specific, whereas
$\mathbf w_l$ is shared by the associated UAVs and constrained by
the RHS hardware.

{
Table~\ref{tab:association_comparison} compares the AP metrics and
association strategies of representative methods. Existing
low-complexity schemes mainly use large-scale information or
user-specific SINR, whereas optimization and learning methods incur
higher complexity. The proposed method uses post-RHS observation
quality under a shared probing pattern and yields a low-complexity
Top-$J_k$ solution for the lower-bound surrogate.
}
\begin{table}[t]
\centering
\caption{Comparison of representative AP association methods.}
\label{tab:association_comparison}
\renewcommand{\arraystretch}{1.10}
\setlength{\tabcolsep}{2.5pt}
\scriptsize

\begin{tabularx}{\columnwidth}{
|
>{\raggedright\arraybackslash}p{2.20cm}
>{\raggedright\arraybackslash}X
>{\raggedright\arraybackslash}p{3.10cm}
|}
\hline
Method
&
AP metric
&
Association strategy
\\
\hline

Distance/LSF/power
\cite{Xue2025Handover,Kandil2025Greedy,
Shi2023UAVThreshold}
&
Distance, LSF, or power
&
Ranking or thresholding
\\

Coverage/load/pilot
\cite{AlAlwani2025CAPA,Liao2026EdgeEnhanced}
&
LSF with network constraints
&
Constrained selection
\\

MINLP/learning
\cite{Liu2026EGAT,Zhou2025TDDPG,Du2025DMA}
&
Rate, QoS, or system state
&
Optimization or learning
\\

SIM/DMA/mmWave/
multi-criterion
\cite{Shi2025SIM,Du2025DMA,Wang2022mmWaveCF,
Zeng2025FuzzyPacketLoss}
&
Effective gain, rate, or multiple indicators
&
Architecture-specific design
\\

Conventional SINR
&
SINR with a user-specific combiner
&
SINR ranking
\\

Proposed
&
Post-RHS desired gain, leakage, and noise
&
Lower-bound Top-K ranking
\\
\hline
\end{tabularx}

\end{table}

\section{Analytical Findings and Proofs}

\subsection{A nearer AP is not necessarily better}

Consider UAV $k$ and two candidate APs $l$ and $m$. 
In conventional distance-based association, if $d_{lk}<d_{mk}$, AP $l$ would usually be preferred. 
However, under the RHS-enabled uplink reception model, the AP association priority is not determined only by distance or large-scale fading. 
It also depends on the  RHS-induced desired effective channel, the residual inter-UAV leakage after RHS reception, and the filtered noise power.

For AP $l$ and UAV $k$, define
$
\mathcal D_{l,k}
=
\mathbf w_l^{H}
\mathbf H_{l,k}\mathbf P_k\mathbf H_{l,k}^{H}
\mathbf w_l,
$
and
$
\mathcal I_{l,k}
=
\mathbf w_l^{H}
\left(
\sum_{i\neq k}
\mathbf H_{l,i}\mathbf P_i\mathbf H_{l,i}^{H}
+
\sigma_l^2\mathbf I_{N_l}
\right)
\mathbf w_l.
$
Then the per-AP association score can be written as
$
\psi_{l,k}
=
\mathcal D_{l,k}/\mathcal I_{l,k}.
$

\begin{corollary}[Observation-quality order reversal condition]
AP $l$ should be preferred over AP $m$ under the additive surrogate whenever
$
\psi_{l,k}>\psi_{m,k}.
$
Equivalently, even if AP $l$ has a weaker desired effective signal power than AP $m$, namely
$
\mathcal D_{l,k}<\mathcal D_{m,k},
$
AP $l$ can still be preferred if its normalized association score is larger:
$
\mathcal D_{l,k}/\mathcal I_{l,k}
>
\mathcal D_{m,k}/\mathcal I_{m,k}.
$
This can happen when AP $l$ has a sufficiently smaller residual disturbance power $\mathcal I_{l,k}$.
\end{corollary}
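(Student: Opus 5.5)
The argument rests on the additive surrogate $(\mathcal P0')$ and on the Top-$J_k$ characterization of its optimum, both stated earlier. The objective $\Gamma_k(\mathcal S_k)=\sum_{l\in\mathcal S_k}\psi_{l,k}$ is a sum of nonnegative per-AP terms. I will take two admissible sets $\mathcal S$ and $\mathcal S'$ that differ only in that $\mathcal S$ contains $l$ while $\mathcal S'$ contains $m$ instead. Their objective values then differ by exactly $\psi_{l,k}-\psi_{m,k}$. So if $\psi_{l,k}>\psi_{m,k}$, the set containing $l$ strictly dominates, and the Top-$J_k$ rule ranks $l$ above $m$. This proves the first claim, ``prefer $l$ whenever $\psi_{l,k}>\psi_{m,k}$.''

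If desired, the Trace-based bounds proposition turns this into a statement about the rate. Because $\log_2(1+x)$ is increasing, the lower bound $\log_2(1+\Gamma_k)$ on $R_k$ is also strictly larger for the set containing $l$.

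Next I rewrite the preference in terms of $\mathcal D$ and $\mathcal I$. Since $\mathcal I_{l,k}\geq\sigma_l^2\|\mathbf w_l\|_2^2>0$, whenever $\mathbf w_l\neq\mathbf 0$, and the same holds at $m$, the condition $\psi_{l,k}>\psi_{m,k}$ is equivalent to
$$\mathcal D_{l,k}\,\mathcal I_{m,k}>\mathcal D_{m,k}\,\mathcal I_{l,k},$$
that is, to
$$\frac{\mathcal I_{l,k}}{\mathcal I_{m,k}}<\frac{\mathcal D_{l,k}}{\mathcal D_{m,k}}.$$
Now suppose $\mathcal D_{l,k}<\mathcal D_{m,k}$. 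The right-hand side lies in $(0,1)$, so the reversal happens exactly when AP $l$'s disturbance is smaller than AP $m$'s by more than the desired-power ratio $\mathcal D_{l,k}/\mathcal D_{m,k}$. This makes ``sufficiently smaller residual disturbance'' quantitative.

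The remaining step, and the only one needing care, is to show the reversal set is nonempty, i.e. ``this can happen.'' I will do this by a construction in the model. I would fix $\mathcal D_{l,k}<\mathcal D_{m,k}$ and then make $\mathcal I_{m,k}$ large, without changing $\mathcal D_{m,k}$, by placing an interfering UAV $i$ near UAV $k$'s direction as seen from AP $m$. A natural choice is $\boldsymbol\kappa_{m,i,\cdot}\approx\boldsymbol\kappa_{m,k,\cdot}$, which is the clustered-path scenario, or simply a large $p_i$. This forces $\mathbf w_m^H\mathbf H_{m,i}\mathbf P_i\mathbf H_{m,i}^H\mathbf w_m$ to be large. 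At AP $l$, by contrast, the angular separation keeps the leakage small, so $\mathcal I_{l,k}$ stays close to $\sigma_l^2\|\mathbf w_l\|_2^2$.

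The main obstacle is guaranteeing that the inflated interference at AP $m$ does not also alter $\mathcal D_{m,k}$. This holds automatically, because $\mathcal D_{m,k}$ depends only on $\mathbf H_{m,k}$, $\mathbf P_k$ and the fixed probing $\mathbf w_m^{(0)}$. With that in place, the displayed inequality can be met by letting $p_i$ grow, which completes the existence part.
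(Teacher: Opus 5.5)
Your core step matches the paper's Appendix~A: you use $\mathcal I_{l,k}\ge\sigma_l^2\|\mathbf w_l\|_2^2>0$ (and likewise at $m$) to cross-multiply into $\mathcal I_{l,k}/\mathcal I_{m,k}<\mathcal D_{l,k}/\mathcal D_{m,k}$. Your exchange argument on $\Gamma_k$ is a valid, more explicit reading of ``preferred under the additive surrogate.'' The paper stops at the ratio inequality and reads ``this can happen'' as that conditional.

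Your additional existence construction, however, does not work as written. With $\mathbf P_i=p_i\mathbf I_{M_i}$, both disturbances are affine in $p_i$: $\mathcal I_{l,k}=A_l+p_ic_l$ and $\mathcal I_{m,k}=A_m+p_ic_m$, where $c_l=\|\mathbf H_{l,i}^H\mathbf w_l^{(0)}\|_2^2$ and $c_m=\|\mathbf H_{m,i}^H\mathbf w_m^{(0)}\|_2^2$. Hence $\mathcal I_{l,k}/\mathcal I_{m,k}\to c_l/c_m$ as $p_i\to\infty$, not $0$. Consequently $\mathcal I_{l,k}$ does not stay near $\sigma_l^2\|\mathbf w_l\|_2^2$ unless $c_l=0$ exactly. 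The nonnegative amplitude pattern, with its constant $1/2$ offset, on a finite aperture does not guarantee that. So raising $p_i$ produces a reversal only if $c_l/c_m<\mathcal D_{l,k}/\mathcal D_{m,k}$ already holds, which is the quantitative separation you were trying to construct. Also, for co-altitude UAVs, closeness to UAV $k$ in $\boldsymbol\kappa$ at AP $m$ forces physical closeness, so AP $l$ tends to see a nearby direction as well.

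To repair it, perturb a quantity that enters $\mathcal I_{m,k}$ only. Two options work:
\begin{itemize}
\item the shadowing factor $\chi_{mi}$ of the single link $i\to m$, which appears only in $\mathbf H_{m,i}$ and makes $\mathcal I_{m,k}$ grow linearly with positive slope whenever UAV $i$ leaks into $\mathbf w_m^{(0)}$ at all;
\item the noise variance $\sigma_m^2$, with slope $\|\mathbf w_m^{(0)}\|_2^2>0$.
\end{itemize}
Either leaves $\mathcal D_{l,k}$, $\mathcal D_{m,k}$ and $\mathcal I_{l,k}$ unchanged, so the reversal occurs once $\mathcal I_{m,k}>\mathcal I_{l,k}\mathcal D_{m,k}/\mathcal D_{l,k}$.

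This needs $\mathcal D_{l,k}>0$. The ratio $\mathcal D_{l,k}/\mathcal D_{m,k}$ lies in $[0,1)$ rather than $(0,1)$, and when $\mathcal D_{l,k}=0$ no reversal is possible.
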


\begin{proof}
Please refer to Appendix A.
\end{proof}

\subsection{Large scale height-angle tradeoff}
From the element-level channel model in \eqref{eq:H_lk_target_uav}, the
large scale geometry-dependent gain of the AP-UAV link is 
$
\beta_{lk}\cos\theta_{lk}
$.
Using
$
\beta_{lk}=\bar C d_{lk}^{-\alpha}\chi_{lk}
$,
$
d_{lk}=\sqrt{r_{lk}^{2}+z_{lk}^{2}}
$,
and
$
\cos\theta_{lk}=z_{lk}/d_{lk}
$,
we have
\begin{equation}
\beta_{lk}\cos\theta_{lk}
=
\bar C\chi_{lk}
z_{lk}
\left(
r_{lk}^{2}+z_{lk}^{2}
\right)^{-\frac{\alpha+1}{2}} .
\label{eq:geometry_gain}
\end{equation}

To isolate the height-angle tradeoff, we consider a local comparison
where the transmit covariance, RHS combining pattern, and shadowing term
are fixed. Then the desired received power is proportional to the
geometry-dependent factor in \eqref{eq:geometry_gain}. For a fixed
horizontal distance $r_{lk}=r$, define
\begin{equation}
f(z_{lk})
=
z_{lk}
\left(
r^{2}+z_{lk}^{2}
\right)^{-\frac{\alpha+1}{2}} .
\label{eq:height_gain_function}
\end{equation}
\begin{corollary}[Height-angle tradeoff]
For any $\alpha>0$ and fixed $r_{lk}=r>0$, the function
$f(z_{lk})$ in \eqref{eq:height_gain_function} attains a unique
maximum at
\begin{equation}
z_{lk}^{\star}
=
\frac{r}{\sqrt{\alpha}} .
\end{equation}
Moreover,
\begin{equation}
\frac{\partial f}{\partial z_{lk}}
=
\left(r^{2}+z_{lk}^{2}\right)^{-\frac{\alpha+3}{2}}
\left(r^{2}-\alpha z_{lk}^{2}\right).
\end{equation}
Therefore, $f(z_{lk})$ increases for
$0<z_{lk}<z_{lk}^{\star}$ and decreases for
$z_{lk}>z_{lk}^{\star}$.
\end{corollary}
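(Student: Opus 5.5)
The plan is a direct one-variable calculus argument on $(0,\infty)$. I will first compute the derivative of $f(z)=z\,(r^2+z^2)^{-\frac{\alpha+1}{2}}$ with the product and chain rules:
\begin{equation*}
f'(z)=(r^2+z^2)^{-\frac{\alpha+1}{2}}-(\alpha+1)z^2(r^2+z^2)^{-\frac{\alpha+3}{2}}.
\end{equation*}
Factoring out $(r^2+z^2)^{-\frac{\alpha+3}{2}}$ gives the bracket $(r^2+z^2)-(\alpha+1)z^2=r^2-\alpha z^2$. This is exactly the stated expression for $\partial f/\partial z_{lk}$.

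Next I will read off the sign. The prefactor $(r^2+z^2)^{-\frac{\alpha+3}{2}}$ is strictly positive for $r>0$. Hence $\operatorname{sgn} f'(z)=\operatorname{sgn}(r^2-\alpha z^2)$. Since $\alpha>0$, the quadratic $r^2-\alpha z^2$ is strictly decreasing on $z>0$ and has the unique positive root $z^\star=r/\sqrt{\alpha}$. It follows that $f'>0$ on $(0,z^\star)$ and $f'<0$ on $(z^\star,\infty)$. So $f$ is strictly increasing and then strictly decreasing, which gives the monotonicity claims.

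Uniqueness of the maximizer follows from this strict unimodality. $f$ is continuous on $(0,\infty)$, strictly increasing up to $z^\star$ and strictly decreasing afterwards. Therefore $f(z)<f(z^\star)$ for every $z\neq z^\star$, and $z^\star$ is the unique global maximizer on $(0,\infty)$. The boundary behaviour is consistent with this: $f(z)\to0$ as $z\to0^+$, and $f(z)\sim z^{-\alpha}\to0$ as $z\to\infty$. No second-derivative test is needed.

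I do not expect a real obstacle; the argument is elementary. The only step needing care is the exponent bookkeeping when factoring, namely checking that $-\frac{\alpha+1}{2}=-\frac{\alpha+3}{2}+1$, so the first term contributes $(r^2+z^2)$ to the bracket. I will also state explicitly that $r>0$ and $\alpha>0$ are what guarantee a positive prefactor and a single positive root.
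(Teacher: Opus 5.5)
Your proposal is correct and follows essentially the same route as the paper's Appendix B: you compute the derivative by the product and chain rules, factor out $(r^2+z_{lk}^2)^{-\frac{\alpha+3}{2}}$ to obtain $r^2-\alpha z_{lk}^2$, and read off the sign change at $z_{lk}^{\star}=r/\sqrt{\alpha}$. Your explicit strict-unimodality argument for uniqueness and the boundary limits are small additions. The paper leaves these implicit.
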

\begin{proof}
Please refer to Appendix B.
\end{proof}
Note that the straight-line, approximately constant-altitude trajectory is a
mission-prescribed model for aisle and shelf inspection, rather than
a requirement of the proposed association rule. If UAV positions were
controllable, trajectory optimization could improve the desired gain
by reducing $r_{lk}$ or moving $z_{lk}$ toward
$r_{lk}/\sqrt{\alpha}$, and could reduce inter-UAV leakage through
greater angular separation. For curved or altitude-varying
trajectories, the proposed method remains applicable by updating
$\mathbf H_{l,k}(t)$ and $\psi_{l,k}(t)$ at each association interval.

\subsection{More APs do not always bring significant gains}

According to the association score, let the candidate APs for UAV $k$ be ordered as
$
\psi_{(1)k}
\geq
\psi_{(2)k}
\geq
\cdots
\geq
\psi_{(L)k}.
$
Define
$
\gamma_{(j)k}
=
\psi_{(j)k}.
$
For the additive scalar surrogate used by the low-complexity ranking rule, if the first $J$ APs are selected, the surrogate rate is
$
R_k(J)
=
\log_2
\left(
1+
\sum_{j=1}^{J}
\gamma_{(j)k}
\right).
$
The marginal gain of adding the $J$-th AP is
\begin{equation}
\Delta R_k(J)
=
R_k(J)-R_k(J-1)
=
\log_2
\left(
1+
\frac{
\gamma_{(J)k}
}{
1+\sum_{j=1}^{J-1}\gamma_{(j)k}
}
\right).
\label{eq:deltaR_revised}
\end{equation}

\begin{corollary}[Cluster saturation law for the additive surrogate]
For any $\varepsilon>0$, if
$
\gamma_{(J)k}
\leq
\varepsilon
\left(
1+
\sum_{j=1}^{J-1}
\gamma_{(j)k}
\right),
$
then
$
\Delta R_k(J)
\leq
\log_2(1+\varepsilon).
$
\end{corollary}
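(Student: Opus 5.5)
The argument is direct, starting from the closed-form marginal gain in \eqref{eq:deltaR_revised}. First I confirm that expression by writing
$R_k(J)-R_k(J-1)=\log_2\!\big((1+\sum_{j=1}^{J}\gamma_{(j)k})/(1+\sum_{j=1}^{J-1}\gamma_{(j)k})\big)$
and splitting off the $J$-th term in the numerator, which gives $\log_2\!\big(1+\gamma_{(J)k}/(1+\sum_{j=1}^{J-1}\gamma_{(j)k})\big)$.

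The key point is that the denominator $1+\sum_{j=1}^{J-1}\gamma_{(j)k}$ is strictly positive. Each $\gamma_{(j)k}=\psi_{(j)k}\geq 0$, because $\psi$ is a ratio of a positive semidefinite quadratic form to a quadratic form that is strictly positive thanks to the $\sigma_l^2\mathbf I_{N_l}$ term, as already noted in the proof of the trace-based bounds. Hence we may divide the hypothesis $\gamma_{(J)k}\leq\varepsilon(1+\sum_{j=1}^{J-1}\gamma_{(j)k})$ by this denominator without changing the direction of the inequality. This gives
$$0\leq \frac{\gamma_{(J)k}}{1+\sum_{j=1}^{J-1}\gamma_{(j)k}}\leq \varepsilon .$$

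Because $x\mapsto\log_2(1+x)$ is monotonically increasing on $[0,\infty)$, applying it to this chain yields $\Delta R_k(J)\leq\log_2(1+\varepsilon)$.

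There is no real obstacle in this argument. The only points needing care are the nonnegativity and finiteness of the scores, which make the division and the monotonicity step valid, and the convention for $J=1$, where the empty sum is zero and the bound reads $\gamma_{(1)k}\leq\varepsilon$.

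As an optional remark on why the hypothesis eventually holds, I would note that the ordering gives $\gamma_{(J)k}\leq\gamma_{(j)k}$ for all $j<J$. Therefore the ratio is at most $\gamma_{(J)k}/(1+(J-1)\gamma_{(J)k})<1/(J-1)$. So the hypothesis is automatically satisfied with $\varepsilon=1/(J-1)$, and the marginal gain decays at least as fast as $\log_2(1+1/(J-1))$. This is the diminishing-returns interpretation, though it is not needed for the stated corollary.
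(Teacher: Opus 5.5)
Your proof is correct and is the same argument as the paper's Appendix C: dividing the hypothesis by the positive quantity $1+\sum_{j=1}^{J-1}\gamma_{(j)k}$ bounds the ratio in \eqref{eq:deltaR_revised} by $\varepsilon$, and monotonicity of $\log_2(1+\cdot)$ then gives the result. Two of your additions are sound but not in the paper: you explicitly check that the denominator is positive, and you note (for $J\geq 2$) that the ordering alone makes the hypothesis hold with $\varepsilon=1/(J-1)$.
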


\begin{proof}
Please refer to Appendix C.
\end{proof}

\subsection{Design guidelines for indoor industrial AP deployment}
Conventional distance- or fading-based association fails in indoor industrial UAV networks for two reasons. First, when UAVs are clustered along similar inspection paths, they present nearly identical $\beta_{l k}$ and ranges to a given AP; ranking by these scalars cannot distinguish good APs from those that suffer severe inter-UAV overlap. The proposed score $\psi_{l, k}$ directly senses this interference in its denominator, penalizing APs that cannot spatially separate the target UAV. Second, dense blockages create strong but angularly mismatched links. An AP with a large $\beta_{l k}$ may collect multipath energy that the RHS receive pattern cannot effectively combine. Because $\psi_{l, k}$ evaluates the post-RHS signal quality, it naturally deprioritizes APs whose effective receive gain is degraded by clutter. In open, uncluttered settings with randomly distributed UAVs, these advantages largely disappear and simpler association suffices.
The system model closely reflects the geometry and operation of an indoor industrial hall. The AP-UAV geometry $\left(r_{l k}, z_{l k}, \theta_{l k}\right)$ is fundamental to the height-angle tradeoff (Corollary 2); an optimal vertical separation exists only because APs are ceiling-mounted and UAVs operate at a fixed altitude. The configuration is characteristic of factories and warehouses but absent in general deployments. The UAV motion is modeled as deterministic, straight-line, constant-speed, constant-altitude flight, which matches inspection tasks along aisles and shelves while enabling trajectory-predictable AP association updates in Algorithm 1. Finally, the channel model employs a large path-loss exponent and a strong log-normal shadowing variance that capture the severe, heterogeneous indoor environment; it is this large variance that produces the distance-order reversal probability reported in Corollary 1, confirming the unreliability of geometry-based association in heavily obstructed industrial spaces.

The above analysis leads to the following deployment guidelines.

\textbf{Rule 1:} AP association should use the post-RHS signal-to-disturbance score $\psi_{l,k}$ rather than large-scale fading $\beta_{lk}$ alone.

\textbf{Rule 2:} An AP should preferably serve UAVs with sufficiently distinct transverse wavevectors or angular directions to reduce post-RHS leakage.

\textbf{Rule 3:} Increasing the serving-cluster size gives diminishing returns, so a small set of high-quality APs is often sufficient.

\textbf{Rule 4:} The installation height of ceiling-mounted RHS APs should jointly consider path loss and projection gain, rather than only minimizing distance.

\textbf{Rule 5:} When UAVs are sparse and inter-UAV overlap is weak, conventional large scale fading-based or distance-based clustering can still serve as a low-complexity scheme.

\section{Simulation}

\subsection{Simulation setup}

\begin{table}[h]
\centering
\caption{Simulation parameters}
\label{tb:para}
\begin{tabular}{|p{1cm}p{6.8cm}|}
\hline
Parameter & Value \\
\hline
hall 
& $D_x^{\rm hall}$ = $40$ m, $D_y^{\rm hall}$ = $30$ m, $D_z^{\rm hall}$=$12$ m \\
$L,K,J$ 
& $L=16$, $K=24$, $J=4$ \\

$M_k,N_l$ 
& $M_k=4,\ \forall k$; $N_l=256,\ \forall l$ \\

$\lambda,\mathbf k_s$ 
& $\lambda=0.025$ m, $2\pi/\lambda=251.33$ rad/m, $\mathbf k_s=[251.33,0,0]^\top$ \\

$\mathbf q_l$ 
& Ceiling APs on a $4\times4$ grid: $q_{l,x}\in\{5,15,25,35\}$ m, $q_{l,y}\in\{3.75,11.25,18.75,26.25\}$ m, $q_{l,z}=12$ m \\

$\mathbf u_{l,n}$ 
& Centered $16\times16$ uniform planar grid, element spacing $0.005$ m;  \\

$\mathbf q_{k,1}$ 
& Users $1$-$8$: $(5,5,1.5)+[\epsilon_x,\epsilon_y,0]$ m;
users $9$-$16$: $(18,13,1.5)+[\epsilon_x,\epsilon_y,0]$ m;
users $17$-$24$: $(32,22,1.5)+[\epsilon_x,\epsilon_y,0]$ m,
where $\epsilon_x,\epsilon_y\sim\mathrm{uniform}[-1.5,1.5]$ m  \\

$\bar{\mathbf q}_k$ 
& $\mathbf q_{k,1}+[0.075,0,0]^\top$ \\

$\mathbf Q_k$ 
& $[\mathbf q_{k,1},\mathbf q_{k,1}+[0.05,0,0]^\top,\mathbf q_{k,1}+[0.10,0,0]^\top,\mathbf q_{k,1}+[0.15,0,0]^\top]$ \\

$\mathbf d_k,v_k$ 
& $\mathbf d_k=[0,1,0]^\top$, $v_k=1$ m/s, $\forall k$ \\

$\alpha,\chi_{lk},\xi_{lk}$ 
& $\alpha=2.2$, $\chi_{lk}=10^{\xi_{lk}/10}$, $\xi_{lk}\sim\mathcal N(0,4^2)$ \\

$\eta_{l,n},\alpha_s$ 
& $\eta_{l,n}=1$; $\alpha_s=0$ unless otherwise stated;  \\

$\mathbf P_k,p_k$ 
& $\mathbf P_k=0.1\mathbf I_{M_k}$, $p_k=0.1,\ \forall k$ \\

$\sigma_l^2$ 
& $4.0\times10^{-13},\ \forall l$ \\
\hline
\end{tabular}
\end{table}

We consider the uplink UAV-centric cell-free scenario, where each single-feed RHS-enabled AP forms one  scalar output
$
z_l=\mathbf w_l^H\mathbf y_l
$
in each resource block. 
The same  scalar output can be forwarded to the CPU and used for detecting different UAVs. 
For UAV $k$, the CPU stacks the selected AP outputs $\{z_l:l\in\mathcal S_k\}$ and evaluates the achievable rate using \eqref{eq:rate_logdet_expanded}.
The proposed association rule ranks candidate APs according to the  post-RHS score. 

For comparison, the following benchmark association rules are considered:
\begin{enumerate}
\item \textit{Proposed score-based association}: AP association based on the post-RHS score $\psi_{l,k}$, which jointly accounts for the link strength $\beta_{lk}$, the  RHS response $\mathbf w_l$, residual inter-UAV leakage, and the local filtered noise term.
\item \textit{Large-scale fading}: AP association based only on $\beta_{lk}$.
\item \textit{Nearest AP}: AP association based only on the distance $d_{lk}$.
\item \textit{Pathloss and projection}: AP association based on $d_{lk}^{-\alpha}\cos\theta_{lk}$, which accounts for path loss and the receiving angle factor but not residual interference.
\item \textit{CAPA}: coverage-aware AP association \cite{AlAlwani2025CAPA}.
\item \textit{CLSF}: load-constrained large-scale fading \cite{Liao2026EdgeEnhanced}.
\end{enumerate}

The considered performance metrics include:
\begin{itemize}
\item the minimum UAV spectral efficiency,
$
\min_{1\le k\le K}
R_k;
$

\item the average spectral efficiency,
$
\frac{1}{K}
\sum_{k=1}^{K}
R_k;
$

\item Jain's fairness index,
$
\mathcal J
=
\frac{\left(\sum_{k=1}^{K}R_k\right)^2}
{K\sum_{k=1}^{K}R_k^2};
$

\item the transmit-power-normalized energy efficiency,
$
\frac{\sum_{k=1}^{K}R_k}
{\sum_{k=1}^{K}\operatorname{tr}(\mathbf P_k)};
$

\item the normalized single-AP score,
$
\frac{\psi_{l,k}(z)}
{\max_z \psi_{l,k}(z)};
$

\item the average marginal gain of adding APs to the serving cluster, as defined in \eqref{eq:deltaR_revised};

\item the probability of distance-order reversal,
$
\frac{
\sum_{k=1}^{K}
\sum_{l=1}^{L}
\sum_{\substack{m=1\\ m\neq l}}^{L}
\mathbb{I}
\left(
d_{lk}<d_{mk},\ 
\psi_{l,k}<\psi_{m,k}
\right)
}{
\sum_{k=1}^{K}
\sum_{l=1}^{L}
\sum_{\substack{m=1\\ m\neq l}}^{L}
\mathbb{I}
\left(
d_{lk}<d_{mk}
\right)
}
$;

\item the Top-$J$ overlap ratio;

\end{itemize}

The simulation scenario and default parameter values are summarized in Table \ref{tb:para}. The UAVs' initial positions are clustered into three groups to emulate localized warehouse inspection task regions.  All UAVs move along straight-line trajectories at a constant altitude with a constant speed of $1~\mathrm{m/s}$ and a common heading direction $\mathbf d=[0,1,0]^{\top}$. The flight duration is $4$ seconds.

{To evaluate the performance-loss trade-off related to exhaustive search or optimization/learning-based methods, we further consider a small-scale system with $L=5, K=4, J=2, M_k=2$, and $N_l=36$; each marker represents one independent channel realization, and all methods use the same channels, RHS constraints, and CPU-side rate evaluator. The optimization/learning-based method uses one-swap refinement, which starts from the proposed Top-$J_k$ association and repeatedly replaces one selected AP with one unselected AP whenever the exchange improves the exact CPU-side sum rate.}

{To isolate the influence of industrial UAV clustering, we vary the
standard spatial spread of the UAVs around three localized inspection
regions from $0.5$ to $6$ m. The other parameters follow the default
setup with $L=16$, $K=24$, $J=4$, $M_k=4$, and $N_l=256$. All
association methods use the same channel realizations.}

\subsection{Simulation results and discussion}

Fig.~\ref{fig:min_rate_K} shows the minimum UAV data rate versus the number of  UAVs. As the number of requesting UAVs increases, the minimum rate decreases for all association schemes because the uplink becomes more interference-limited and the weakest UAVs suffer from stronger inter-UAV overlap. The proposed  method achieves the highest minimum data rate over the whole range of $K$, demonstrating its advantage in protecting weak UAVs. This is because the proposed score considers not only the desired effective RHS gain but also the residual overlap and local disturbance covariance. Among the benchmarks, the nearest-AP rule gives the lowest rate, which confirms that geometric proximity alone is insufficient for RHS-enabled indoor UAV cell-free reception.

\begin{figure}[h]
\centering
\includegraphics[width=0.4\textwidth]{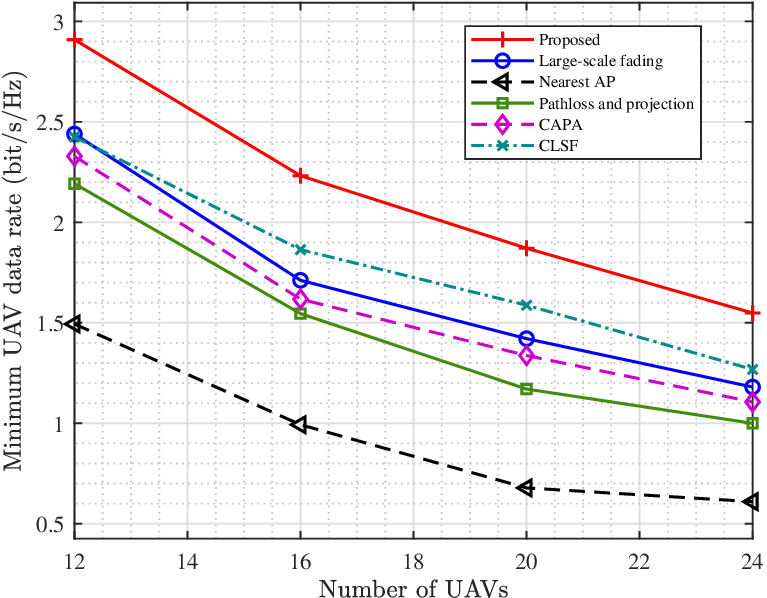}
\caption{Minimum UAV data rate versus the number of  UAVs.}
\label{fig:min_rate_K}
\end{figure}

Fig.~\ref{fig:avg_se_K} plots the average spectral efficiency versus the number of  UAVs. The average spectral efficiency decreases with $K$ because more simultaneous UAV transmissions increase the aggregate interference and reduce the effective post-combining SINR. The proposed  association consistently outperforms all benchmark methods. This indicates that post-RHS signal-to-disturbance score ranking improves not only the worst-user performance but also the overall system throughput. The performance gap between the proposed method and the nearest-AP rule is significant, showing that selecting APs according to distance may lead to poor AP choices when the nearest AP has unfavorable RHS angular response or strong multi-UAV overlap.

\begin{figure}[h]
\centering
\includegraphics[width=0.4\textwidth]{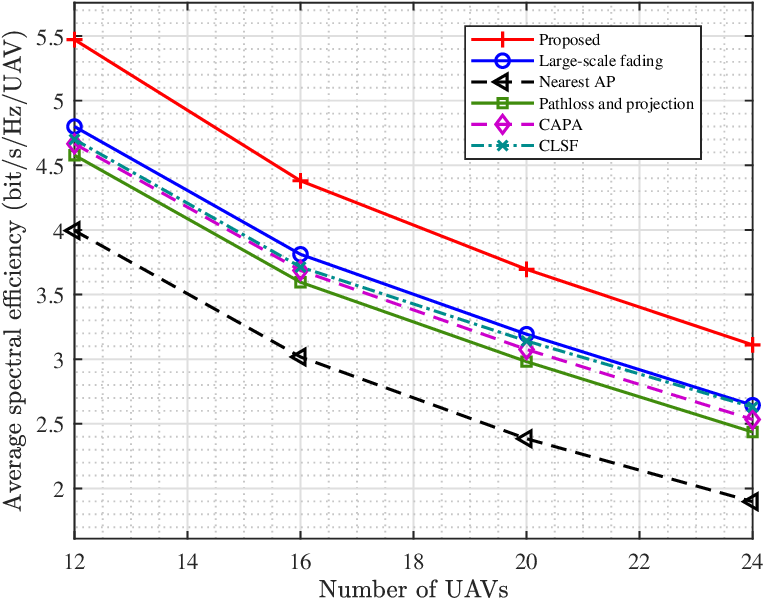}
\caption{Average spectral efficiency versus the number of  UAVs.}
\label{fig:avg_se_K}
\end{figure}

Fig.~\ref{fig:fairness_J} shows Jain's fairness index versus the number of serving APs $J$. The fairness index increases as $J$ becomes larger, because adding more serving APs provides additional macro-diversity and reduces the probability that a UAV is served only by weak or highly interfered APs. The proposed method achieves the best fairness performance among the compared schemes. This confirms that the proposed association score is effective in balancing service quality across UAVs. The fairness gain becomes more visible when the serving-cluster size is moderate or large, where the proposed method can select APs that jointly provide strong desired gain and low disturbance.

\begin{figure}[h]
\centering
\includegraphics[width=0.4\textwidth]{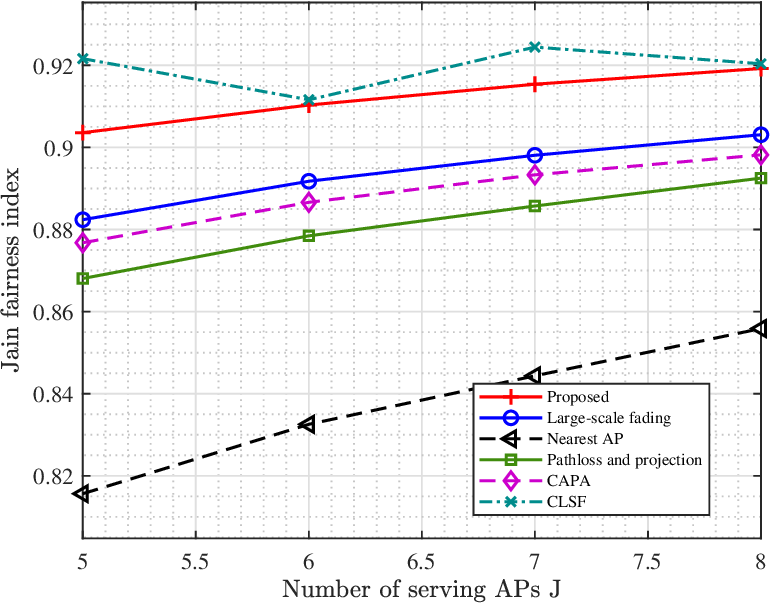}
\caption{Jain fairness index versus the number of serving APs.}
\label{fig:fairness_J}
\end{figure}

Fig.~\ref{fig:ee_K} compares the energy efficiency of different AP association schemes. The energy efficiency decreases as the number of requesting UAVs increases, since the total transmit power grows with $K$ while the achievable sum spectral efficiency is increasingly limited by inter-UAV overlap. The proposed  method achieves the highest energy efficiency in all simulated cases. This means that, for the same transmit-power budget, the proposed ranking-based AP association can extract more useful information from the distributed RHS-enabled APs. The nearest-AP rule again gives the lowest energy efficiency because it may select APs with high disturbance even if they are close to the UAV.

\begin{figure}[h]
\centering
\includegraphics[width=0.4\textwidth]{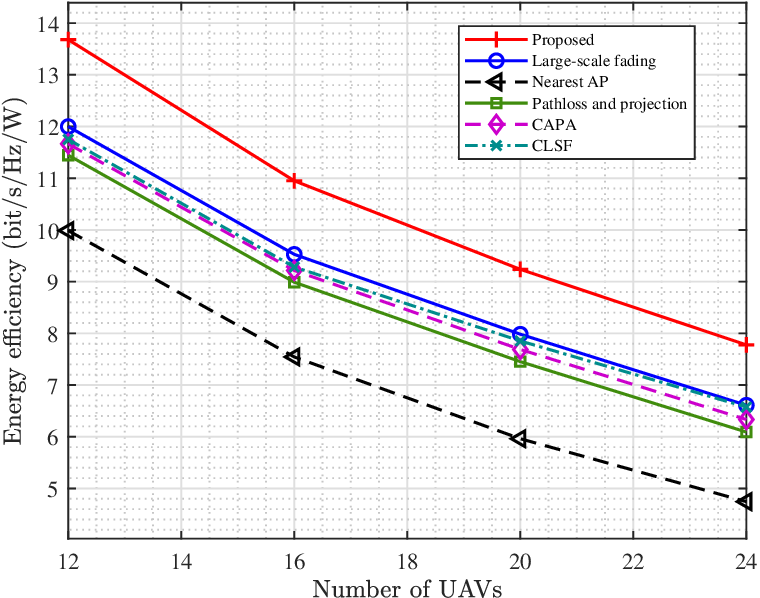}
\caption{Energy efficiency versus the number of  UAVs.}
\label{fig:ee_K}
\end{figure}

Fig.~\ref{fig:cdf_rate} presents the CDF of the per-UAV data rate. The CDF curve of the proposed  method is shifted to the right compared with all benchmark schemes, which means that it improves the rate distribution of individual UAVs. In particular, for the same outage probability, the proposed method provides a higher per-UAV data rate. This lower-tail improvement is important for indoor industrial UAV applications, where reliable service to weak UAVs is often more critical than peak throughput. The nearest-AP based scheme has the leftmost CDF, confirming that distance-based association leads to inferior rate reliability in the considered RHS-enabled cell-free network.

\begin{figure}[h]
\centering
\includegraphics[width=0.4\textwidth]{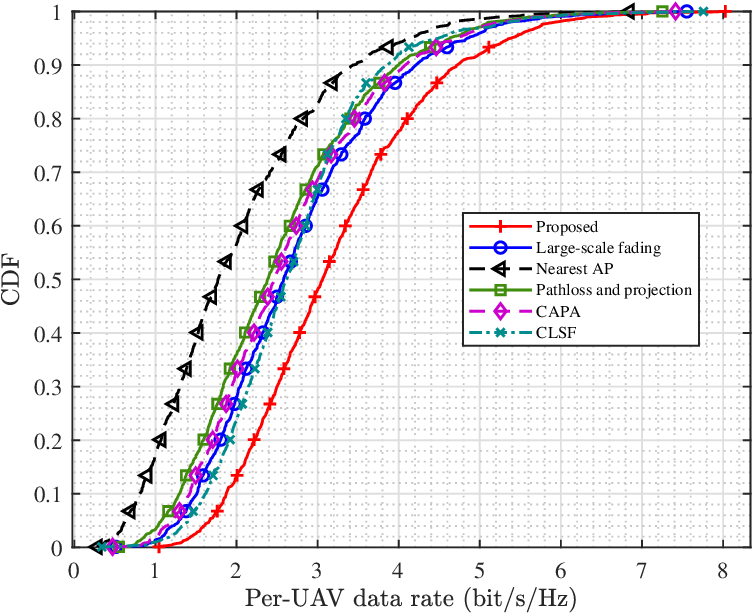}
\caption{CDF of the per-UAV data rate.}
\label{fig:cdf_rate}
\end{figure}

Fig.~\ref{fig:reversal} verifies the first analytical finding that the nearest AP is not necessarily the best AP. The distance-order reversal probability is clearly nonzero for all considered shadowing standard deviations and UAV numbers. This means that a farther AP can provide a better association score than a nearer AP. The reason is that the proposed score depends on the effective RHS channel gain, angular separability, inter-UAV overlap, and disturbance covariance, rather than only on distance. Moreover, the reversal probability becomes higher when the shadowing standard deviation increases, which indicates that the distance order becomes less reliable in more heterogeneous indoor propagation environments.

\begin{figure}[h]
\centering
\includegraphics[width=0.4\textwidth]{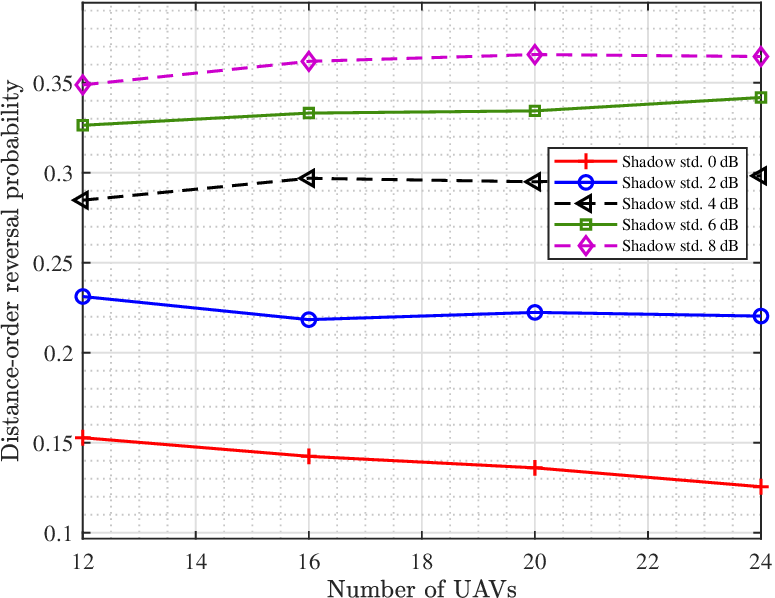}
\caption{Distance-order reversal probability versus the number of  UAVs.}
\label{fig:reversal}
\end{figure}

Fig.~\ref{fig:height_tradeoff} verifies the second analytical finding, namely the AP-UAV height-angle tradeoff. For each horizontal distance $r$, the normalized single-AP score first increases and then decreases with the AP-UAV relative height difference $z$. This confirms the existence of an interior optimal height difference. When $z$ is too small, the UAV observes the ceiling-mounted RHS with a large incidence angle, which reduces the projection gain. When $z$ is too large, the increased propagation distance and path loss dominate. Therefore, AP deployment height should not be determined only by the shortest-distance rule, but should jointly consider path loss and RHS angular response.

\begin{figure}[h]
\centering
\includegraphics[width=0.4\textwidth]{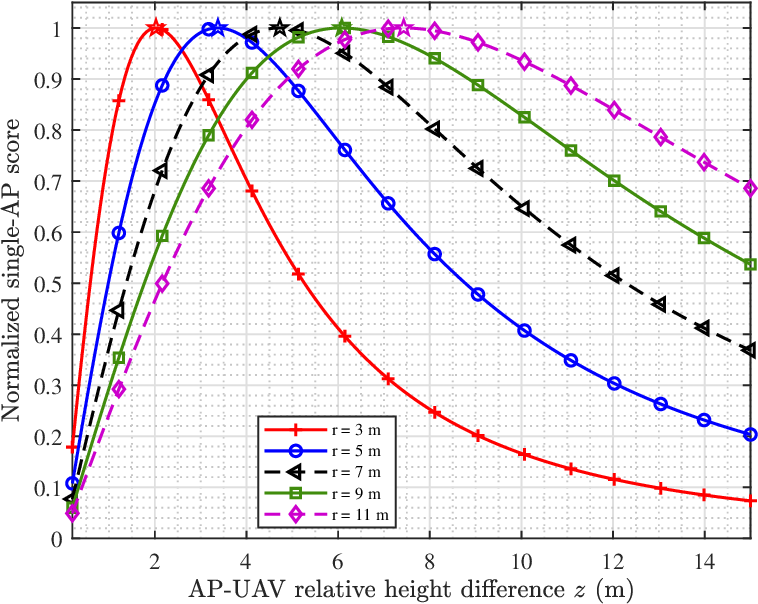}
\caption{Normalized single-AP score versus AP-UAV relative height difference.}
\label{fig:height_tradeoff}
\end{figure}

Fig.~\ref{fig:saturation} verifies the third analytical finding, i.e., the cluster saturation law. The average marginal gain $\Delta R_k(J)$ decreases as the number of serving APs increases. This shows that the first few selected APs provide most of the useful rate improvement, while adding more APs brings only a small additional gain. The result supports the proposed Top-$J$ ranking rule and suggests that a practical system can use an adaptive stopping criterion: if the marginal gain of adding one more AP is below a prescribed threshold, the serving cluster can stop expanding. 

\begin{figure}[h]
\centering
\includegraphics[width=0.4\textwidth]{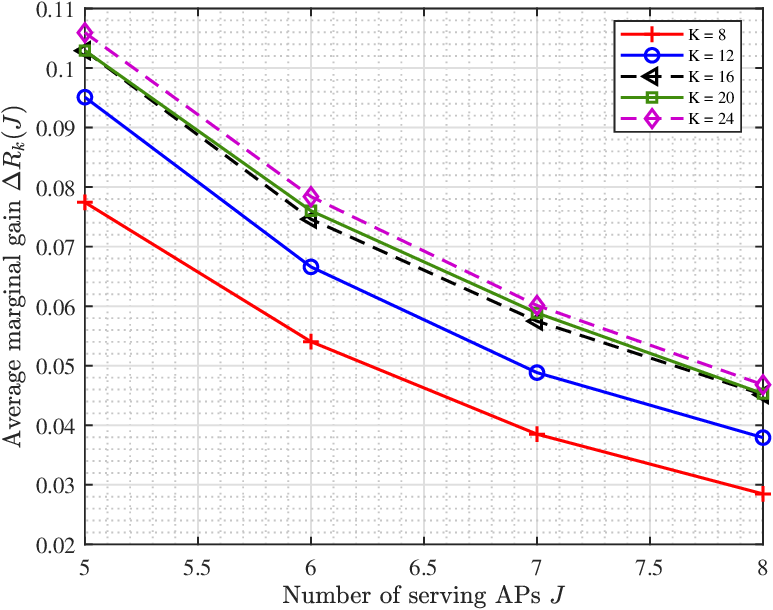}
\caption{Average marginal gain versus the number of serving APs.}
\label{fig:saturation}
\end{figure}

Fig.~\ref{fig:cdf_overlap} shows the CDF of the selected AP overlap ratio. A smaller selected AP overlap ratio indicates that the chosen APs experience less residual inter-UAV leakage relative to the useful desired component. The proposed  curve is shifted to the left, which means that the proposed method selects APs with lower effective overlap than the benchmark schemes. This behavior directly follows from the disturbance term in the proposed association score. In contrast, large-scale-fading, pathloss-and-projection, and nearest-AP based methods may select APs with strong average channels but severe multi-UAV overlap. Therefore, this result confirms the key motivation of the proposed method: AP association in RHS-enabled indoor UAV cell-free networks should be based on post-combining quality rather than distance or large-scale fading alone.

\begin{figure}[h]
\centering
\includegraphics[width=0.4\textwidth]{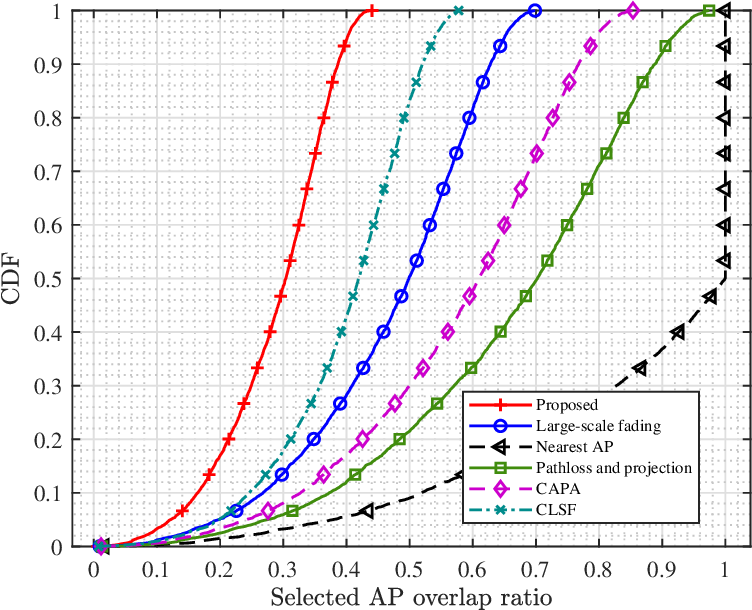}
\caption{CDF of the selected AP overlap ratio.}
\label{fig:cdf_overlap}
\end{figure}

{Fig.~\ref{fig:tradeoff} shows a clear performance-complexity tradeoff. The proposed Top-$J_k$ method requires only about $10^{-4}-10^{-3} \mathrm{~s}$, but achieves a lower average spectral efficiency. The one-swap optimization improves the spectral efficiency and approaches the exhaustive-search result while requiring only about $10^{-2} \mathrm{~s}$. Exhaustive search provides the highest-performance reference, but its runtime is around $10^0 \mathrm{~s}$. Thus, the proposed method offers the lowest online complexity while only about 14\% performance loss.}

\begin{figure}[h]
\centering
\includegraphics[width=0.4\textwidth]{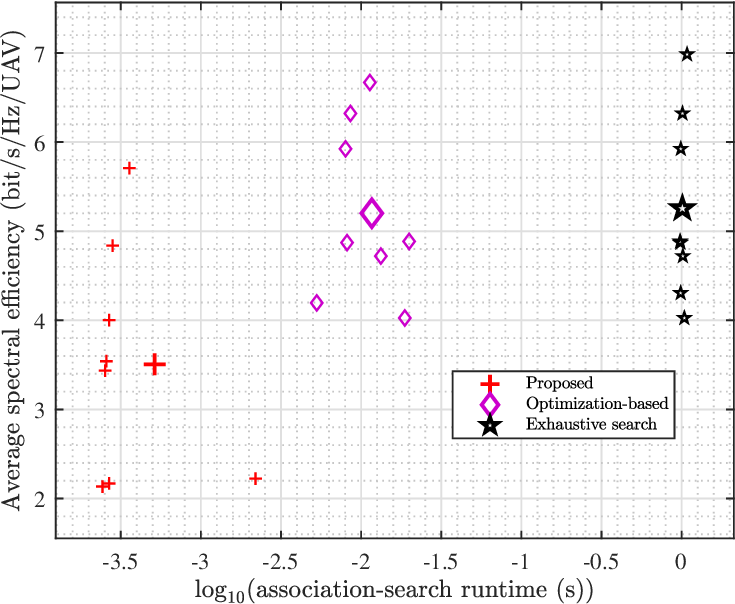}
\caption{{Performance-runtime tradeoff ($K=4$)}}
\label{fig:tradeoff}
\end{figure}

{Fig.~\ref{fig:cluster} evaluates the effect of localized industrial UAV operation.
When the UAV cluster spread is small, candidate APs observe more
similar UAV directions and hence stronger residual overlap. The
proposed method consistently selects APs with lower overlap than LSF
and CLSF because this interference is explicitly included in its
score. As the UAVs become more dispersed, the overlap decreases for
all methods. This result clarifies that the main scenario-specific
advantage of the proposed rule arises in clustered industrial
inspection deployments, rather than in open and sparsely populated
environments.}

\begin{figure}[h]
\centering
\includegraphics[width=0.4\textwidth]{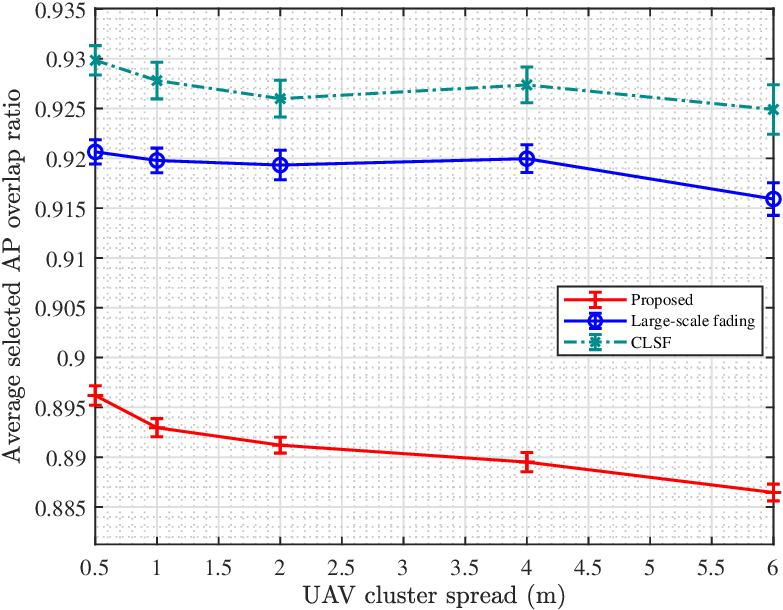}
\caption{{Average selected AP overlap ratio versus the UAV cluster
spread.}}
\label{fig:cluster}
\end{figure}

\section{Conclusion}
This paper investigated AP association for uplink RHS-enabled cell-free networks in indoor industrial UAV scenarios. A post-RHS signal-to-disturbance score was proposed as the association metric, which jointly captures large-scale fading, effective post-RHS channel gain, residual inter-UAV overlap after holographic combining, and local disturbance power. Under the assumption of weak inter-AP disturbance covariance, a low-complexity ranking rule based on per-AP scores was derived. The analysis established three findings: a nearer AP is not necessarily the optimal serving AP; the relative AP-UAV height difference creates a height-angle tradeoff with an interior optimum; and enlarging the serving cluster brings diminishing marginal gains. Simulations with clustered UAVs moving along straight-line constant-altitude inspection trajectories demonstrated that the proposed method improves the minimum UAV data rate, average spectral efficiency, fairness, and transmit-power-normalized energy efficiency compared with conventional distance-, fading-, and coverage-based association schemes.

\section*{Appendix A \\ Proof of Corollary 1}

\begin{proof}
$
\psi_{l,k}
=
\frac{\mathcal D_{l,k}}{\mathcal I_{l,k}},
\psi_{m,k}
=
\frac{\mathcal D_{m,k}}{\mathcal I_{m,k}}.
$
Thus,
\begin{equation}
\psi_{l,k}>\psi_{m,k}
\end{equation}
is equivalent to
\begin{equation}
\frac{\mathcal D_{l,k}}{\mathcal I_{l,k}}
>
\frac{\mathcal D_{m,k}}{\mathcal I_{m,k}}.
\end{equation}
Since
$
\mathcal I_{l,k}>0,
$
$
\mathcal I_{m,k}>0,
$
we have
\begin{equation}
\mathcal D_{l,k}\mathcal I_{m,k}
>
\mathcal D_{m,k}\mathcal I_{l,k}.
\end{equation}
Even if
$
\mathcal D_{l,k}<\mathcal D_{m,k},
$
the inequality
$
\mathcal D_{l,k}\mathcal I_{m,k}
>
\mathcal D_{m,k}\mathcal I_{l,k}
$
can still hold when
$
\frac{\mathcal I_{l,k}}{\mathcal I_{m,k}}
<
\frac{\mathcal D_{l,k}}{\mathcal D_{m,k}}.
$
Therefore, AP $l$ can be preferred over AP $m$ even when its desired effective signal power is smaller, provided that its effective disturbance power is sufficiently smaller. 
This completes the proof.
\end{proof}

\section*{Appendix B \\ Proof of Corollary 2}
\begin{proof}
From the definition of $f(z_{lk})$,
$
f(z_{lk})
=
z_{lk}
\left(
r^2+z_{lk}^2
\right)^{-\frac{\alpha+1}{2}}.
$
Then
\begin{equation}
\begin{aligned}
\frac{\partial f}{\partial z_{lk}}
&=
\left(
r^2+z_{lk}^2
\right)^{-\frac{\alpha+1}{2}}
+
z_{lk}
\left[
-\frac{\alpha+1}{2}
\left(
r^2+z_{lk}^2
\right)^{-\frac{\alpha+3}{2}}
2z_{lk}
\right]
\\
&=
\left(
r^2+z_{lk}^2
\right)^{-\frac{\alpha+1}{2}}
-
(\alpha+1)z_{lk}^2
\left(
r^2+z_{lk}^2
\right)^{-\frac{\alpha+3}{2}}
\\
&=
\left(
r^2+z_{lk}^2
\right)^{-\frac{\alpha+3}{2}}
\left[
r^2+z_{lk}^2-(\alpha+1)z_{lk}^2
\right]
\\
&=
\left(
r^2+z_{lk}^2
\right)^{-\frac{\alpha+3}{2}}
\left(
r^2-\alpha z_{lk}^2
\right).
\end{aligned}
\end{equation}
Since
$
\left(
r^2+z_{lk}^2
\right)^{-\frac{\alpha+3}{2}}
>
0,
$
we have
$
\frac{\partial f}{\partial z_{lk}}>0
\iff
r^2-\alpha z_{lk}^2>0
\iff
0<z_{lk}<\frac{r}{\sqrt{\alpha}},
$
$
\frac{\partial f}{\partial z_{lk}}=0
\iff
z_{lk}=\frac{r}{\sqrt{\alpha}},
$
and
$
\frac{\partial f}{\partial z_{lk}}<0
\iff
z_{lk}>\frac{r}{\sqrt{\alpha}}.
$
Therefore,
$
z_{lk}^\star
=
\frac{r}{\sqrt{\alpha}}
$
is the unique maximizer of $f(z_{lk})$.
\end{proof}

\section*{Appendix C \\ Proof of Corollary 3}

\begin{proof}
From \eqref{eq:deltaR_revised},
$
\Delta R_k(J)
=
\log_2
\left(
1+
\frac{
\gamma_{(J)k}
}{
1+
\sum_{j=1}^{J-1}
\gamma_{(j)k}
}
\right).
$
If
\begin{equation}
\gamma_{(J)k}
\leq
\varepsilon
\left(
1+
\sum_{j=1}^{J-1}
\gamma_{(j)k}
\right),
\end{equation}
then
\begin{equation}
\frac{
\gamma_{(J)k}
}{
1+
\sum_{j=1}^{J-1}
\gamma_{(j)k}
}
\leq
\varepsilon.
\end{equation}
Therefore,
\begin{equation}
1+
\frac{
\gamma_{(J)k}
}{
1+
\sum_{j=1}^{J-1}
\gamma_{(j)k}
}
\leq
1+\varepsilon.
\end{equation}
Since $\log_2(\cdot)$ is monotonically increasing,
$
\Delta R_k(J)
\leq
\log_2(1+\varepsilon).
$
This completes the proof.
\end{proof}

\bibliographystyle{Bibliography/IEEEtranTIE}
\bibliography{Bibliography/IEEEabrv,Bibliography/mybibfile.bib} 

\end{document}